\def\th@plain{%
	\thm@notefont{}
	\itshape 
}
\def\th@definition{%
	\thm@notefont{}
	\normalfont 
}
\newcommand{\R}{\mathbb{R}}
\newcommand{\orig}{\mathbf{0}}
\newcommand{\av}{\mathbb{E}}
\newcommand{\qgraph}{\mathcal{G}}
\newcommand{\tree}{\mathcal{T}}
\newcommand{\diam}{\mathrm{diam}}
\newcommand{\dotp}[2]{\langle#1,#2\rangle}
\newcommand{\ignore}[1]{}
\newcommand{\conv}{\mathrm{conv}}
\newcommand{\cp}{\mathrm{cp}}
\newcommand{\Caratheodory}{Carath\'eodory\xspace}
\newcommand{\Barany}{B\'ar\'any\xspace}
\newcommand{\Soberon}{Sober\'{o}n\xspace}
\newcommand{\maxdegree}{\Delta(\qgraph)}
\newcommand{\gdiam}{\mathrm{diam}(\qgraph)}
\newcommand{\numedges}{\|\qgraph\|}
\DeclarePairedDelimiter{\ceil}{\lceil}{\rceil}
\DeclarePairedDelimiter{\floor}{\lfloor}{\rfloor}
\newtheorem{lemma}{Lemma}
\numberwithin{lemma}{section}
\newtheorem{theorem}[lemma]{Theorem}
\newtheorem{remark}[lemma]{Remark}
\begin{document}

\title{No-dimensional Tverberg Theorems and Algorithms\thanks{Supported in
part by ERC StG 757609. A preliminary version appeared as
A.~Choudhary and W.~Mulzer. \emph{No-dimensional Tverberg theorems and 
algorithms}, Proc.~36th SoCG, pp.~31:1--31:17, 2020.
}
}

\author{
Aruni Choudhary\thanks{Institut f\"ur Informatik, 
Freie Universit\"at Berlin. \texttt{[arunich, mulzer]@inf.fu-berlin.de}.
}
\and
Wolfgang Mulzer\footnotemark[2]
}

\maketitle

\begin{abstract}
Tverberg's theorem 
states that for any 
$k \ge 2$ and any 
set $P \subset \R^d$ of
at least $(d + 1)(k - 1) + 1$ points in $d$ dimensions,
we can partition $P$ into $k$ subsets 
whose convex hulls have a non-empty 
intersection. The associated search problem of finding the partition
lies in the complexity 
class $\text{CLS} = \text{PPAD} \cap \text{PLS}$, but
no hardness results are known.
In the \emph{colorful} Tverberg theorem,
the points in $P$ have colors,
and under certain conditions, 
$P$ can be partitioned into  \emph{colorful} 
sets, in which each color appears
exactly once and whose convex hulls 
intersect. To date, the complexity 
of the associated search problem 
is unresolved.
Recently,  Adiprasito, \Barany, and 
Mustafa~[SODA 2019] gave a \emph{no-dimensional}
Tverberg theorem, in which the
convex hulls  
may intersect in an \emph{approximate}
fashion. This relaxes the
requirement on the cardinality 
of $P$. 
The argument is constructive, 
but does not result in a 
polynomial-time algorithm.

We present a deterministic algorithm that
finds for any $n$-point set $P \subset \R^d$ and 
any $k \in \{2, \dots, n\}$
in $O(nd \ceil{\log k})$ time
a $k$-partition of $P$ such that 
there is a ball of radius $O\left((k/\sqrt{n})\mathrm{diam(P)}\right)$ 
that intersects the convex hull of each set.
Given that 
this problem is not known to be solvable 
exactly in 
polynomial time, our result provides 
a remarkably efficient and simple new notion of 
approximation.

Our main contribution is to generalize 
Sarkaria's method~[Israel Journal Math., 1992] 
to reduce the Tverberg problem to the 
Colorful \Caratheodory problem (in the simplified tensor 
product interpretation of \Barany and Onn) and to apply it algorithmically.
It turns out that this not only leads to an alternative algorithmic 
proof of a no-dimensional Tverberg theorem, but it also generalizes 
to other settings such as the colorful variant of the problem.
\end{abstract}

\section{Introduction}
\label{section:introduction}

In 1921, Radon~\cite{radon} proved a seminal 
theorem in convex geometry:
given a set $P$ of at least $d + 2$ points in 
$\R^d$, one can always split
$P$ into two  non-empty sets whose convex hulls 
intersect. In 1966, 
Tverberg~\cite{tverberg-original} generalized 
Radon's theorem to allow for more sets in 
the partition. Specifically, he showed that 
for any $k \geq 1$, if a
$d$-dimensional point set $P\subset\R^d$ has cardinality 
at least $(d + 1)(k - 1) + 1$, then $P$ can be partitioned 
into $k$ non-empty, pairwise disjoint sets $T_1, \dots, T_k \subset P$ 
whose convex hulls have a non-empty 
intersection, i.e., 
$\bigcap_{i = 1}^k \conv(T_i)  \neq \emptyset$, 
where $\conv(\cdot)$ denotes the convex hull.

By now, several alternative proofs of Tverberg's 
theorem are known, 
e.g.,~\cite{tverberg-second,tv-generalization,roudneff-conic,MillerSh10,sarkaria,bo-tverberg,abbfm-colorful,BaranyBlZi16}.
Perhaps the most elegant proof is due to 
Sarkaria~\cite{sarkaria}, with 
simplifications by \Barany and Onn~\cite{bo-tverberg} 
and by Aroch et al.~\cite{abbfm-colorful}.
In this paper, all further references to \emph{Sarkaria's method} refer
to the simplified version.
This proof proceeds by a reduction to the 
\emph{Colorful \Caratheodory theorem}, another 
celebrated result in convex geometry:
given $r \geq d + 1$ 
point sets $P_1, \dots, P_r \subset \R^d$ 
that have a common point $y$ in their 
convex hulls $\conv(P_1), \dots, \conv(P_r)$,
there is a \emph{traversal}
$x_1 \in P_1, \dots, x_r \in P_r$,
such that $\conv(\{x_1, \dots, x_r\})$
contains $y$. 
A two-dimensional example is given in Figure~\ref{figure:colca-basic}.
\begin{figure}
\centering
\includegraphics[width=0.43\textwidth,page=2]{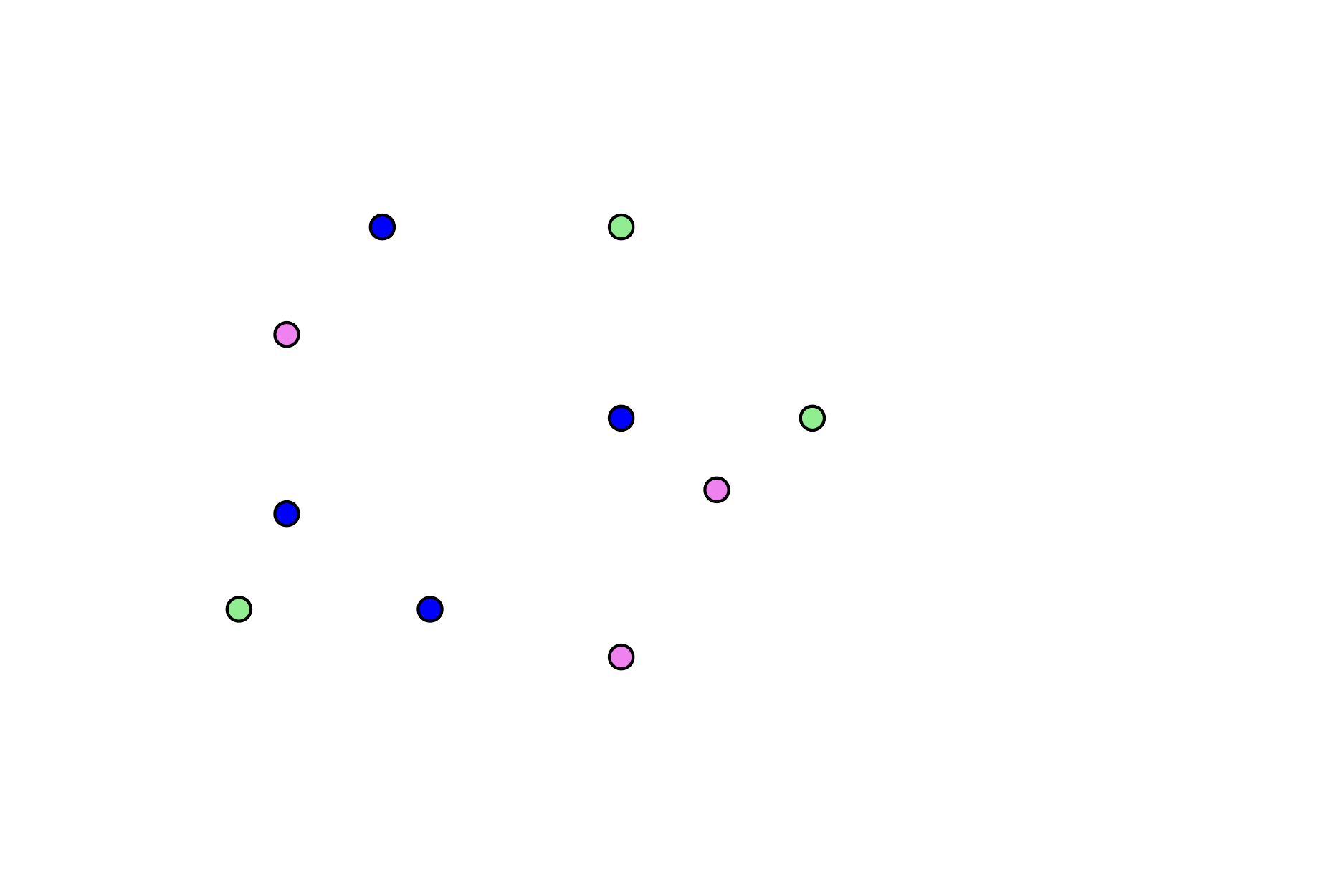}\hspace{1em}
\includegraphics[width=0.43\textwidth,page=4]{colca-basic}
\caption{The Colorful \Caratheodory theorem. Left: the convex 
hulls of the three point sets intersect; 
Right: a colorful triangle that contains the common point.}
\label{figure:colca-basic}
\end{figure}
Sarkaria's proof~\cite{sarkaria} 
uses a tensor product to lift the original points
of the Tverberg instance into higher
dimensions, and then 
uses the Colorful \Caratheodory
traversal to obtain a Tverberg partition 
for the original point set.

From a computational point of view, a Radon 
partition is easy to find by 
solving $d + 1$ linear equations.
On the other hand, finding Tverberg partitions 
is not straightforward.  Since a Tverberg partition 
must exist if 
$P$ is large enough, finding such a partition 
is a total search problem. In fact, the problem of computing 
a Colorful \Caratheodory traversal
lies in the complexity class 
$\text{CLS} = \text{PPAD} \cap \text{PLS}$~\cite{MulzerSt18,mmss-cc},
but no better upper bound 
is known.
Sarkaria's proof gives a polynomial-time reduction from
the problem of finding a Tverberg partition to the 
problem of finding a colorful traversal, thereby placing the 
former problem in the same complexity class.
Again, as of now we do not know better upper bounds
for the general problem.
Miller and Sheehy~\cite{MillerSh10} and 
Mulzer and Werner~\cite{mw-approx} provided algorithms
for finding \emph{approximate} Tverberg partitions,
computing a partition into fewer sets than is guaranteed
by Tverberg's theorem in time 
that is linear in $n$, but 
quasi-polynomial in the dimension.
These algorithms were motivated by applications in mesh 
generation and statistics that require finding a point
that lies ``deep'' in $P$. A point in the common intersection of 
the convex hulls of a Tverberg partition has this property,
with the partition serving as a 
certificate of depth.
Recently Har-Peled and Zhou have proposed algorithms~\cite{hpz-tverberg}
to compute approximate Tverberg partitions that take time 
polynomial in $n$ and $d$.

Tverberg's theorem also admits a colorful variant, 
first conjectured 
by \Barany and Larman~\cite{bl-colorful}.
The setup consists of $d + 1$ point sets 
$P_1, \dots, P_{d+1} \subset \R^d$,
each set interpreted as a different color and having size $t$.
For a given $k$, the goal is to find $k$ pairwise-disjoint 
\emph{colorful} sets (i.e., each set contains
at most one point from each $P_i$) $A_1,\dots,A_k$ such that 
$\bigcap_{i=1}^{k} \conv(A_i) \neq \emptyset$.
The problem is to determine the optimal value of $t$ 
for which such a colorful partition always exists.
\Barany and Larman~\cite{bl-colorful} conjectured that $t=k$ suffices
and they proved the conjecture 
for $d = 2$ and arbitrary $k$, and
for $k =2 $ and arbitrary $d$.
The first result for the general case was given by 
\v{Z}ivaljevi\'c and Vre\'cica~\cite{zv-colorful} 
through topological arguments.
Using another topological argument,
Blagojevi\v{c}, Matschke, and Ziegler~\cite{bmz-colorful} showed that
(i) if $k + 1$ is prime, then $t = k$;
and (ii) if $k + 1$ is not prime, then  
$k\le t \le 2k-2$.
These are the best known bounds for arbitrary $k$.
Later Matou\v{s}ek, Tancer, and Wagner~\cite{mtw-geometric} gave 
a geometric proof that is inspired
by the proof of Blagojevi\v{c}, Matschke, and Ziegler~\cite{bmz-colorful}.

More recently, Sober\'on~\cite{soberon-equal} showed 
that if more color classes are available,
then the conjecture holds for any $k$.
More precisely, for  $P_1,\dots,P_n \subset \R^d$ with 
$n = (k - 1)d + 1$, each of size $k$,
there exist $k$ colorful sets whose convex hulls intersect.
Moreover, there is a point in the common intersection
so that the coefficients of its convex combination 
are the same for each colorful set in the partition.
The proof uses Sarkaria's tensor product construction.

Recently Adiprasito, \Barany, and 
Mustafa~\cite{abm-nodim} established 
a relaxed version of the Colorful 
\Caratheodory theorem and some of its descendants~\cite{barany-cc}.
For the Colorful \Caratheodory theorem, this 
allows for a (relaxed) traversal of arbitrary size, 
with a guarantee that the convex hull of the traversal is close to the common point $y$.
For the Colorful Tverberg problem, they prove a version of the 
conjecture where the convex hulls of the colorful sets intersect 
approximately.
This also gives a relaxation for 
Tverberg's theorem~\cite{tverberg-original} that
allows arbitrary-sized partitions, again with an approximate
notion of intersection.
Adiprasito et al.~refer to these results as 
\emph{no-dimensional} versions 
of the respective classic theorems, because
the dependence on the
ambient dimension is relaxed.
The proofs use averaging arguments.
The argument for the no-dimensional Colorful \Caratheodory theorem
also gives an efficient algorithm to find a suitable traversal.
However, the arguments for the no-dimensional Tverberg theorem results do 
not give a polynomial-time algorithm for finding the 
partitions.

\paragraph{Our contributions.}
We prove no-dimensional variants of the Tverberg theorem 
and its colorful counterpart that allow for efficient algorithms.
Our proofs are inspired by Sarkaria's method~\cite{sarkaria} and the 
averaging technique by Adiprasito, \Barany, and Mustafa~\cite{abm-nodim}.
For the colorful version, we additionally make use of ideas of
\Soberon~\cite{soberon-equal}. Furthermore, we also 
give a no-dimensional generalized Ham-Sandwich theorem~\cite{zv-extension}
that interpolates  between the Centerpoint theorem and the Ham-Sandwich 
theorem~\cite{st-sandwich}, again with an efficient algorithm. 

Algorithmically, Tverberg's theorem is useful for finding 
centerpoints of high-dimensional point sets, which in turn 
has applications in statistics and mesh generation~\cite{MillerSh10}.
In fact, most algorithms for finding centerpoints are 
Monte-Carlo, returning some point $p$ and a probabilistic guarantee 
that $p$ is indeed a centerpoint~\cite{ClarksonEMST96,Har-PeledJ19}. 
However, this is coNP-hard to verify.
 On the other hand, a (possibly approximate) 
Tverberg partition immediately gives a certificate of 
depth~\cite{MillerSh10,mw-approx}.
Unfortunately, there are no polynomial-time algorithms for finding
optimal Tverberg partitions.
In this context, our result provides a fresh notion of approximation 
that also leads to very fast polynomial-time algorithms. 

Furthermore, the Tverberg problem is intriguing from a
complexity theoretic point of view, because it constitutes 
a total search problem that is not known to be solvable 
in polynomial time, but which is also unlikely to be NP-hard.
So far, such problems have mostly been studied in the 
context of algorithmic game theory~\cite{NRTV2007}, and only 
very recently a similar line of investigation has been 
launched for problems in high-dimensional discrete 
geometry~\cite{mmss-cc,MulzerSt18,Filos-RatsikasG19, dgmm-survey}.
Thus, we show that the \emph{no-dimensional} variant of 
Tverberg's theorem is easy from this point of view.
Our main results are as follows:

\begin{itemize}
\item 
Sarkaria's method uses a specific set of $k$ vectors in $\R^{k-1}$ 
to lift the points in the Tverberg instance to 
a Colorful \Caratheodory instance.
We refine this method to vectors that are defined with
the help of a given graph.
The choice of this graph is important in proving good 
bounds for the partition and in the algorithm.
We believe that this generalization is of 
independent interest and may prove useful
in other scenarios that rely on the tensor product construction.

\item Let $\diam(x)$ denote the diameter of any set $x$.
We prove an efficient no-dimensional Tverberg result:
\begin{restatable}[efficient no-dimensional Tverberg]{theorem}{ndtverberg}
\label{theorem:tverberg}
Let $P$ be a set of $n$ points in $d$ dimensions, 
and let $k \in \{2, \dots, n\}$ be an integer.
\begin{enumerate}[(i)]
\item \label{tverberg1} 
For any choice of 
positive integers $r_1,\dots,r_k$ that satisfy $\sum_{i=1}^{k} r_i = n$,
there is a partition $T_1, \dots, T_k$ of $P$ with 
$|T_1| = r_1, |T_2|=r_2, \dots , |T_k|=r_k$,
and a ball $B$ of radius 
\[
\frac{n\,\diam(P)}{\min_i r_i} \sqrt{\frac{10 \ceil{\log_4 k} }{n-1}}
=O\left(\frac{\sqrt{n\log k}}{\min_i r_i} \,\diam(P)\right)
\]
such that $B$ intersects the convex hull of each $T_i$.

\item \label{tverberg2}  
The bound is better for the case $n=rk$ and $r_1=\dots=r_k=r$.
There exists a partition $T_1, \dots, T_k$ of $P$ with 
$|T_1| = \dots = |T_k| = r$ and a $d$-dimensional ball of radius
\[
\sqrt{\frac{k(k-1)}{n-1}}\diam(P)
=O\left(\frac{k}{\sqrt{n}}\diam(P)\right)
\]
that intersects the convex hull of each $T_i$.

\item \label{tverberg3}  
In either case, the partition $T_1, \dots, T_k$
can be computed in deterministic time
\[
O(nd\ceil{\log k}).
\]
\end{enumerate}

\end{restatable}
See Figure~\ref{figure:nodim-partition} for a simple illustration.

\begin{figure}
\centering
\includegraphics[width=0.43\textwidth,page=5]{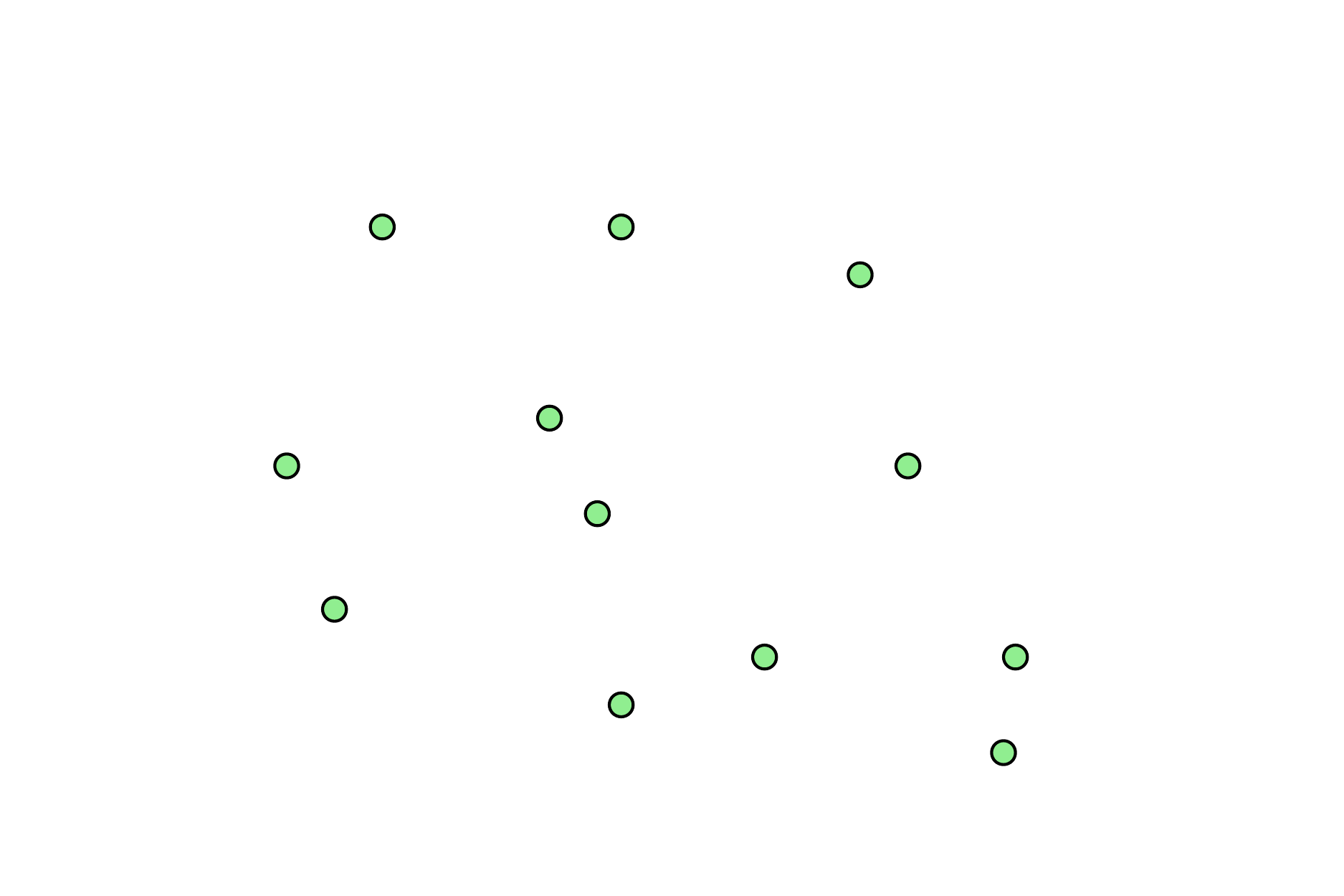}\hspace{1em}
\includegraphics[width=0.43\textwidth,page=3]{simple-example}
\caption{Left: a 4-partition of a planar point set. Larger 
Tverberg partitions are not possible
because there are not enough points.
Right: a 5-partition on the same point set with a disk intersecting the convex hulls of 
each set of the partition.}
\label{figure:nodim-partition}
\end{figure}

\item 
and a colorful counterpart (for a simple example, see 
Figure~\ref{figure:colored-partition}):
\begin{restatable}[efficient no-dimensional Colorful Tverberg]{theorem}{ndcolorfultverberg}
\label{theorem:colorful-tverberg}
Let $P_1$, $\ldots$, $P_n\subset \R^d$ be point sets, each of size $k$,
with $k$ being a positive integer, 
so that the total number of points is $N = nk$.
\begin{enumerate}[(i)]
\item \label{ctverberg1}
Then, there are $k$ pairwise-disjoint colorful sets $A_1,\dots,A_k$ 
and a ball of radius
\[
\sqrt{\frac{2k(k-1)}{N}}\max_i\diam(P_i)
=O\left(\frac{k}{\sqrt{N}}\,\max_i\diam(P_i)\right)
\]
that intersects $\conv(A_i)$ for each $i\in [k]$.

\item \label{ctverberg2}
The colorful sets $A_1,\dots,A_k$ can be computed in deterministic time
$
O(Ndk).
$
\end{enumerate}
\end{restatable}

\item 
For any sets $P, x\subset \R^d$, the \emph{depth} of 
$x$ with respect to $P$ is the largest positive integer $k$ 
such that every half-space that contains $x$ also contains 
at least $k$ points of $P$.
\begin{restatable}[no-dimensional Generalized Ham-Sandwich]{theorem}{ndinterpolate}
\label{theorem:nodim-gghs}
Let $k$ finite point sets $P_1$, $\ldots$, $P_k$ in $\R^d$ be given, 
and let $m_1, \dots, m_k$, $2 \leq m_i \leq |P_i|$ for $i \in [k]$, $k\le d$,
be any set of integers.

\begin{enumerate}[(i)]
\item \label{gghs1}
There is a linear transformation and a 
ball $B\in \R^{d-k+1}$ of radius 
\[
(2 + 2\sqrt{2}) \max_i \frac{\diam(P_i)}{\sqrt{m_i}},
\]
such that the hypercylinder $B\times \R^{k-1}\subset \R^d$
has depth at least $\left\lceil |P_i|/m_i \right\rceil$
with respect to $P_i$, for $i \in [k]$, after applying the transformation.

\item \label{gghs2} 
The ball and the transformation can be determined in time
\[
O\left(d^6+dk^2+\sum_{i} |P_i|d\right).
\]
\end{enumerate}
\end{restatable}
\end{itemize}

\begin{figure}
\centering
\includegraphics[width=0.43\textwidth,page=1]{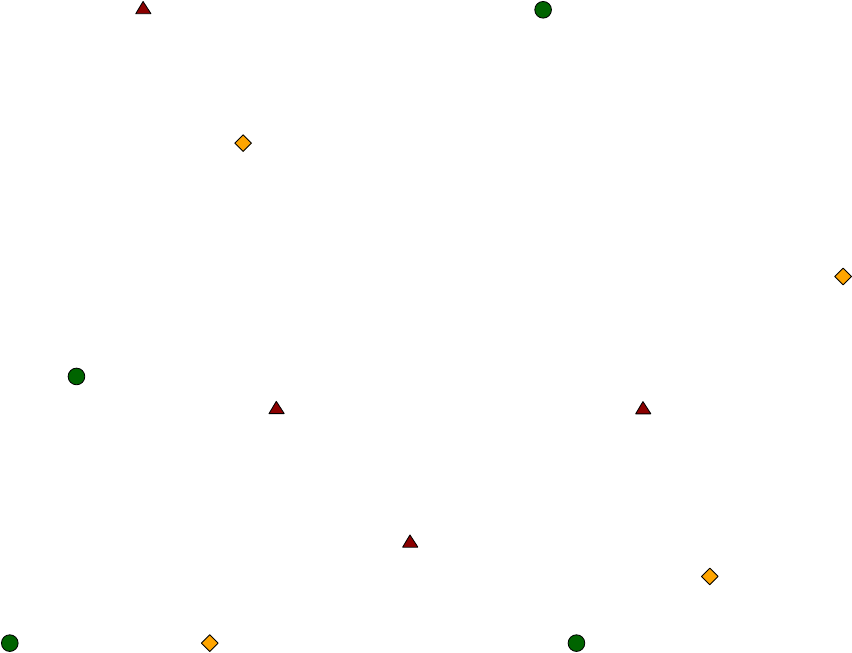}\hspace{1em}
\includegraphics[width=0.43\textwidth,page=2]{colorful_tverberg}
\caption{Left: a point set on three colors and four points of each color.
Right: a colorful partition with a ball containing
the centroids (squares) of the sets of the partition.}
\label{figure:colored-partition}
\end{figure}

The colorful Tverberg result is similar in spirit to 
the regular version,
but from a computational viewpoint, it 
does not make sense to use the colorful algorithm to 
solve the regular Tverberg problem.

Compared to the results of Adiprasito et 
al.~\cite{abm-nodim}, our radius bounds 
are slightly worse.  More precisely, they 
show that both in the colorful and the non-colorful
case, there is a ball of radius 
$O\left(\sqrt{k/n}\,\diam(P)\right)$ that intersects 
the convex hulls of the sets of the partition.
They also show this bound is close to optimal.
In contrast, our result is off by a factor of 
$O(\sqrt{k})$,
but derandomizing the proof of Adiprasito et al.~\cite{abm-nodim} 
gives only a brute-force $2^{O(n)}$-time algorithm. 
In contrast, our approach gives almost linear time algorithms 
for both cases, with a linear dependence on the dimension.

\paragraph{Techniques.}
Adiprasito et 
al.~first prove the colorful no-dimensional Tverberg 
theorem using an averaging argument over an exponential 
number of possible partitions. Then, they specialize their 
result for the non-colorful case, obtaining a bound that 
is asymptotically optimal. Unfortunately, it is not clear 
how to derandomize the averaging argument efficiently.
The method of conditional expectations applied to their
averaging argument leads to a running time of $2^{O(n)}$.
To get around this, we follow an alternate approach 
towards both versions of the  Tverberg theorem.
Instead of a direct averaging argument, we use a reduction 
to the Colorful \Caratheodory theorem that is inspired 
by Sarkaria's proof, with some additional twists.
We will see that this reduction also works in the no-dimensional 
setting, i.e., by a reduction to the no-dimensional 
Colorful \Caratheodory theorem of Adiprasito et al., we 
obtain a no-dimensional Tverberg theorem, with slightly weaker 
radius bounds, as stated above.
This approach has the advantage that
their Colorful \Caratheodory theorem 
is based on an averaging argument that 
permits an efficient derandomization using the 
method of conditional expectations~\cite{as-probabilistic}. 
In fact, we will see 
that the special structure of the no-dimensional Colorful \Caratheodory 
instance that we create allows for a very fast evaluation of the conditional 
expectations, as we fix the next part of the solution.
This results in an algorithm whose running time 
is $O(n d \ceil{\log k})$ instead of $O(n d k)$, as given by a 
naive application of the method. With a few interesting modifications,
this idea also works in the colorful setting.
This seems to be the first instance of using 
Sarkaria's method with special lifting vectors, and 
we hope that this will prove useful for further studies on
Tverberg's theorem and related problems.

\paragraph{Updates from the conference version.}
An extended abstract~\cite{cm-tverberg-socg} of this work appeared at the 
36th International Symposium on Computational Geometry.
The conference abstract omitted the details of the results
of Theorem~\ref{theorem:colorful-tverberg} and 
Theorem~\ref{theorem:nodim-gghs}.
In this version, we present all the missing details.

\paragraph{Outline of the paper.}
We describe our extension of 
Sarkaria's technique in Section~\ref{section:setup} 
and an averaging argument that is essential for our results.
In Section~\ref{section:theorem1}, we present the
proof of the no-dimensional Tverberg theorem (\Cref{theorem:tverberg}). 
The algorithm for computing the partition is also detailed therein.
Section~\ref{section:theorem2} contains the results for the
colorful setting of Tverberg (\Cref{theorem:colorful-tverberg}) and Section~\ref{section:theorem3}
presents results for the generalized Ham-Sandwich theorem (\Cref{theorem:nodim-gghs}).
We conclude in Section~\ref{section:conclusion} with some observations 
and open questions.

\section{Tensor product and Averaging argument}
\label{section:setup}

Let $P \subset \R^d$ be the 
given set of $n$ points.
We assume for simplicity that
the centroid of $P$, that we denote by $c(P)$,
coincides with the origin $\orig$, that is, 
$\sum_{x \in P} x = \orig$.
For ease of presentation, we denote the origin by
$\orig$ in all dimensions, as long as 
there is no danger of ambiguity.
Also, we write 
$\dotp{\cdot}{\cdot}$ for the usual scalar product
between two vectors in the appropriate dimension, and 
$[n]$ for the set $\{1, \dots, n\}$.

\subsection{Tensor product}
\label{section:tensor}

Let $x = (x_1, \dots, x_d) \in \R^d$ and 
$y = (y_1, \dots, y_m) \in \R^m$ be any two vectors.
The \emph{tensor product} $\otimes$ is 
the operation that takes $x$ and $y$ to 
the $dm$-dimensional vector $x\otimes y$
whose $ij$-th component is $x_i y_j$, that is, 
\[
x\otimes y= (xy_1, \dots, xy_m) =
\left(x_1y_1, \dots,x_dy_1, x_1y_2,\dots, x_d y_{m-1}, \dots,x_dy_m 
\right) \in \R^{dm}.
\]
Easy calculations show that for any 
$x, x'\in \R^d, y, y'\in \R^m$, the 
operator $\otimes$ satisfies:
\begin{enumerate}[(1)]
\item $x\otimes y + x'\otimes y = (x+x')\otimes y$;
\item $x\otimes y + x\otimes y' = x\otimes (y+y')$; and
\item\label{item:tensorscalar} $\dotp{x\otimes y}{x'\otimes y'} = \dotp{x}{x'}\dotp{y}{y'}$.
\end{enumerate}
By (\ref{item:tensorscalar}), 
the $L_2$-norm $\|x\otimes y \|$ of the tensor product 
$x\otimes y$ is exactly $\|x\|\|y\|$.
For any set of vectors $X = \{x_1, x_2, \dots\} $ in $\R^d$ and any
$m$-dimensional vector $q \in \R^m$, we denote by 
$X \otimes q$ the set of tensor products
$\{x_1 \otimes q, x_2 \otimes q, \dots \} \subset \R^{dm}$.
Throughout this paper, all distances will be measured  in the $L_2$-norm.

\paragraph{A set of lifting vectors.}
We generalize the tensor construction that was
used by Sarkaria to prove the Tverberg theorem~\cite{sarkaria}.
For this, we provide a way to construct a set of $k$ vectors 
$\{q_1, \dots, q_k\}$ that we use to create tensor products.
The motivation behind the precise choice of these vectors will be
clear in the next section, when we apply the construction to prove
the no-dimensional Tverberg result.
Let $\qgraph$ be an (undirected) simple, connected 
graph of $k$ nodes. 
Let
\begin{itemize}
\item $\numedges$ denote the number of edges in $\qgraph$, 

\item $\maxdegree$ denote the maximum degree of any node in $\qgraph$, and 

\item $\gdiam$ denote the diameter of $\qgraph$, i.e., the maximum length
of a shortest path between a pair of vertices in $\qgraph$.
\end{itemize}

We orient the edges of $\qgraph$ in an arbitrary manner
to obtain an oriented graph.
We use this directed version of $\qgraph$ to 
define a set of $k$ vectors $\{q_1, \dots,
q_k\}$ in $\numedges$ dimensions.
This is done as follows: each vector $q_i$ corresponds 
to a unique node $v_i$ of $\qgraph$ and its co-ordinates correspond
to the row in the oriented incidence matrix assigned to $v_i$.
More precisely, each coordinate position of the vectors 
corresponds to a unique edge of $\qgraph$. 
If $v_iv_j$ is a directed edge of $\qgraph$, then $q_i$ contains 
a $1$ and $q_j$ contains a $-1$ in the 
corresponding coordinate position.
The remaining co-ordinates are zero.
That means, the vectors $\{q_1, \dots, q_k\}$ are in $\R^{\numedges}$.
Also, $\sum_{i = 1}^{k} q_i = \orig$.
It can be verified that this is the unique linear dependence
(up to scaling) between the vectors for any choice of edge orientations
of $\qgraph$.
This means that the rank of the matrix with the $q_i$'s as the rows is $k-1$.
It can be verified that:
\begin{lemma}
\label{claim:basic-q}
For each vertex $v_i$, the squared norm $\|q_i\|^2$ is the degree of $v_i$.
For $i\neq j$, the dot product $\dotp{q_i}{q_j}$ 
is $-1$ if $v_iv_j$ is an edge in $\qgraph$, 
and $0$ otherwise.
\end{lemma}

An immediate application of Lemma~\ref{claim:basic-q} and 
property (\ref{item:tensorscalar}) of the tensor product is that for 
any set  of $k$ vectors $\{u_1,\dots,u_k\}$, each of the same dimension, 
the following relation holds:
\begin{align}
\left\|\sum_{i=1}^{k} u_i\otimes q_i \right\|^2
&=\sum_{i=1}^{k} \sum_{j=1}^{k} \dotp{u_i\otimes q_i}{u_j\otimes q_j}
\nonumber\\
&= \sum_{i=1}^{k} \sum_{j=1}^{k} \dotp{u_i}{u_j}\dotp{q_i}{q_j} 
\nonumber\\
&= \sum_{i=1}^{k} \dotp{u_i}{u_i}\dotp{q_i}{q_i}+
2\sum_{1\le i<j\le k}^{k} \dotp{u_i}{u_j}\dotp{q_i}{q_j}
\nonumber\\
&=\sum_{i=1}^{k} \|u_i\|^2 \|q_i\|^2-
2\sum_{v_iv_j\in E} \dotp{u_i}{u_j}
\nonumber\\
&=\sum_{v_iv_j\in E} \|u_i-u_j\|^2,
\label{equation:tree-product}
\end{align}
where $E$ is the set of edges of $\qgraph$.\footnote{We 
note that this identity is very similar to the Laplacian 
quadratic form that is used in spectral graph theory; see, 
e.g., the lecture notes by Spielman~\cite{spielman-sgt}
for more information.}

One of the simplest examples of such a set can be formed
by selecting $\qgraph$ to be the star graph.
Each of the $k-1$ leaves correspond to a standard basis vector of $\R^{k-1}$
and the root corresponds to $(-1,\dots,-1)\in \R^{k-1}$.
This is also the set used in \Barany and Onn's interpretation~\cite{bo-tverberg}
of Sarkaria's proof.

A more sophisticated example can be formed by 
taking $\qgraph$ as a balanced binary tree with $k$ nodes, 
and orienting the edges away from the root. 
Let $q_1$ correspond to the root.
A simple instance of the vectors is shown below:
\begin{center}
\begin{tikzpicture} 
[every node/.style = {minimum width = 2em, draw, circle},
level/.style = {sibling distance = 30mm/#1}]
\centering
\node{$q_1$}
child {
node {$q_2$} 
child {node {$q_4$}}
child {node {$q_5$}}
}
child {
node {$q_3$} 
child {node {$q_6$} }
child {node {$\dots$} }
};
\end{tikzpicture}
\end{center}
The vectors in the figure above can be represented as the matrix
\[
\begin{pmatrix}
q_1\\
q_2\\
q_3\\
q_4\\
q_5\\
q_6\\
\dots \\
\end{pmatrix}
=
\begin{pmatrix}
 1 & 1 & 0 & 0 & 0 & 0 & 0 & 0 &\dots \\
-1 & 0 & 1 & 1 & 0 & 0 & 0 & 0 &\dots  \\
 0 &-1 & 0 & 0 & 1 & 1 & 0 & 0 &\dots \\
 0 & 0 &-1 & 0 & 0 & 0 & 1 & 1 &\dots \\
 0 & 0 & 0 &-1 & 0 & 0 & 0 & 0 & \dots \\
 0 & 0 & 0 & 0 & -1 & 0 & 0 & 0 & \dots \\
 &&&\dots &&&
\end{pmatrix}
\]
where the $i$-th row of the matrix corresponds to vector $q_i$.
As $\numedges=k-1$, each vector is in $\R^{k-1}$.
The norm $\|q_i\|$ is either 
$\sqrt{2}$, $\sqrt{3}$, or $1$, depending 
on whether $v_i$ is the root, an internal node with two children, 
or a leaf, respectively.
The height of $\qgraph$ is $\ceil {\log k}$ 
and the maximum degree is $\maxdegree=3$.

\subsection{Averaging argument}
\label{subsection:averaging}

\paragraph{Lifting the point set.}
Let $ P = \{p_1, \dots, p_n\}\subset \R^d$.
We first pick a graph $\qgraph$ with $k$ vertices, as 
in the previous paragraph, and we derive a set of $k$ 
lifting vectors $\{q_1, \dots, q_k\}$ from $\qgraph$.
Then, we lift each point of $P$ to a set of vectors 
in $d\numedges$ dimensions, by taking
tensor products with the vectors $\{q_1, \dots, q_k\}$.
More precisely, for $a \in [n]$ and 
$j \in [k]$, let 
$
p_{a,j} = p_a \otimes q_j \in \R^{d\numedges}.
$
For $a \in [n]$, we let 
$P_a = \{p_{a,1}, \dots,p_{a,k}\}$ be the 
lifted points obtained from $p_a$.
We have, 
$
\|p_{a,j}\| = \|q_j\| \|p_a\| \leq \sqrt{\maxdegree}\|p_a\|.
$
By the bi-linear properties of the tensor
product, we have
\[
c(P_a) = \frac{1}{k} \sum_{j = 1}^{k} \left( p_a  \otimes q_j\right)
= \frac{1}{k} \left( p_a \otimes \left(\sum_{j=1}^{k}  q_j \right)  \right)
= \frac{1}{k} \left( p_a \otimes \orig \right)
= \orig,
\]
so the centroid $c(P_a)$ coincides with the origin, for $a \in [n]$.

The next lemma contains the technical core of our argument. 
The result is applied in Section~\ref{section:theorem1}
to derive a useful partition 
of $P$ into $k$ subsets of prescribed sizes from the lifted point sets.
\begin{lemma}
\label{lemma:balanced-traversal}
Let $P = \{p_1, \dots, p_n \}$ be a set of $n$ points 
in $\R^d$ satisfying $\sum_{i=1}^{n} p_i = \orig$. 
Let $P_1, \dots, P_n$ denote the point sets 
obtained by lifting each $p_i \in P$ using the vectors 
$\{ q_1, \dots, q_k \}$ defined using a graph $\qgraph$.

\begin{enumerate}[(i)]
\item
\label{avg1} 
For any choice of 
positive integers $r_1,\dots,r_k$ that satisfy $\sum_{i=1}^{k} r_i =n$,
there is a partition $T_1, \dots, T_k$ of $P$ with 
$|T_1|=r_1, |T_2|=r_2, \dots , |T_k|=r_k$ such that the centroid of 
the set of lifted points 
$T:= T_1 \otimes q_1 \cup \dots \cup T_k \otimes q_k$
(this set is also a traversal of $P_1, \dots, P_n$)
has distance less than 
\[
\delta=\sqrt{\frac{ \maxdegree}{2(n-1)}}\diam(P)
\]
from the origin $\orig$.

\item 
\label{avg2}
The bound is better for the case $n=rk$ and $r_1=\dots=r_k= n/k$.
There exists a partition $T_1, \dots, T_k$ of $P$ with 
$|T_1| = |T_2| = \dots = |T_k| = r$ such that the centroid of 
$T:= T_1 \otimes q_1 \cup \dots \cup T_k \otimes q_k$
has distance less than 
\[
\gamma = \sqrt{\frac{\numedges}{k(n-1)}}\diam(P)
\]
from the origin $\orig$.\end{enumerate}

\end{lemma}

\begin{proof}
We use an averaging argument to prove the claims,
like Adiprasito et al.~\cite{abm-nodim}.
More precisely, we bound the average norm $\delta$ of the centroid of the 
lifted points $T_1 \otimes q_1 \cup \dots 
\cup T_k \otimes q_k $ over 
all partitions of $P$ of the form $T_1, \dots, T_k$,
for which the sets in the partition have sizes $r_1, \dots, r_k$ respectively, 
with $\sum_{i=1}^{k} r_i = n$. 

\paragraph{Proof of ~\Cref{lemma:balanced-traversal}(\ref{avg1}).}
Each such partition can be interpreted as a traversal 
of the lifted point sets $P_1, \dots, P_n$ that contains $r_i$ points
lifted with $q_i$, for $i \in [k]$.
Thus, consider any traversal of this type $X = \{x_1, \dots, x_n\}$
of $P_1,\dots,P_n $, where $x_a \in P_a$, 
for $a \in [n]$.
The centroid of $X$ is 
$c(X) = (1/n)\sum_{a=1}^{n} x_a$.
We bound the expectation $n^2\av\left(\|c(X)\|^2\right) = 
\av\left(\left\|\sum_{a = 1}^n x_a\right\|^2\right)$, 
over all possible traversals $X$.
By the linearity of expectation, 
$\av\left(\left\|\sum_{a = 1}^n x_a\right\|^2\right)$
can be written as
\[
\av\left(\left\|\sum_{a = 1}^n x_a\right\|^2\right)
=\av\left( \sum_{a = 1}^n \|x_a\|^2  
+\sum_{\substack{a,b \in [n]\\ a < b}} 2\dotp{x_a}{x_b}\right)
=\av\left( \sum_{a = 1}^n \|x_a\|^2\right)
+2\av\left(\sum_{\substack{a,b \in [n]\\ a < b}} \dotp{x_a}{x_b}\right).
\]
We next find the coefficient of each term of the form
$\|x_a\|^2$ and $\dotp{x_a}{x_b}$ 
in the expectation.
Using the multinomial coefficient, the total number 
of traversals $X$ is
\[
\begin{pmatrix} n \\ r_1, r_2, \dots, r_k \end{pmatrix} = 
\frac{n!}{r_1!r_2! \cdot \dots \cdot r_k!}.
\]
Furthermore, for any lifted point $x_a=p_{a,j}$, the number of 
traversals $X$ with $p_{a,j} \in X$ is
\[
\begin{pmatrix}n - 1 \\ r_1, \dots, r_j - 1, \dots, r_k\end{pmatrix}
= \frac{(n - 1)!}{r_1! \cdot \dots \cdot (r_{j} - 1)! \cdot \dots \cdot r_k!}.
\]
So the coefficient of $\|p_{a,j}\|^2$ is 
\[
\frac{\frac{(n - 1)!}{r_1! \cdot \dots \cdot (r_{j} - 1)! \cdot 
  \dots \cdot r_k}}
{\frac{n!}{r_1! \cdot \dots \cdot r_k!}} = \frac{r_j}{n}.
\]
Similarly, for any pair of points $(x_a,x_b)=(p_{a,i},p_{b,j})$, 
there are two cases
in which they appear in the same traversal:
first, if $i = j$, the number of traversals is 
\[
\frac{(n - 2)!}{r_1! \cdot \dots \cdot (r_{i} - 2)! \cdot \dots \cdot r_k!}.
\]
The coefficient of $\dotp{p_{a,i}}{p_{b,j}}$ 
in the expectation is hence 
\[\frac{r_i(r_i - 1)}{n(n - 1)}.
\]
Second, if $i\neq j$, the number of traversals is calculated to be
\[\frac{(n - 2)!}{r_1! \cdot \dots \cdot (r_{i} - 1)! \cdot \dots \cdot 
  (r_j - 1)! \cdot \dots \cdot r_k!}.\]
The coefficient of $\dotp{p_{a,i}}{p_{b,j}}$ 
is 
\[\frac{r_i r_j}{n(n - 1)}.
\]
Substituting the coefficients, we bound
the expectation as
\begin{align*}
&\av\left( \sum_{a = 1}^n \|x_a\|^2\right)
+2\av\left(\sum_{\substack{a,b \in [n]\\ a < b}} \dotp{x_a}{x_b}\right)
\\
&= \sum_{a=1}^{n} \sum_{j=1}^{k}\| p_{a,j}\|^2 \frac{r_j}{n}
+2\sum_{\substack{a,b \in [n]\\ a < b}}
\left(\sum_{j = 1}^{k}\langle p_{a,j}, p_{b,j} \rangle \frac{r_j(r_{j} - 1)}{n(n - 1)}
+\sum_{\substack{i,j \in [k]\\ i \neq j}}  \langle p_{a,i}, p_{b,j} \rangle 
\frac{r_{i}r_{j}}{n(n-1)}
\right)
\\
&=\sum_{j=1}^{k}\frac{r_j}{n}\sum_{a=1}^{n}\| p_{a,j}\|^2
+ \frac{2}{n(n -1)}\sum_{\substack{a,b \in [n]\\ a < b}}
\left(\sum_{i, j \in [k]}  \langle p_{a,i}, p_{b,j} \rangle 
  r_{i}r_{j}
-\sum_{j=1}^{k} \langle p_{a,j}, p_{b,j} \rangle r_{j}
\right)
\\
&=\sum_{j=1}^{k}r_j \left(\frac{1}{n}\sum_{a=1}^{n}\|p_{a,j}\|^2\right)
+\sum_{\substack{a,b \in [n]\\ a < b}}
\sum_{i, j \in [k]} \frac{2\dotp{p_{a,i}}{p_{b,j}} r_{i}r_{j}}{n(n-1)}
-\sum_{\substack{a,b \in [n]\\ a < b}}
\sum_{j=1}^{k} \frac{2\dotp{p_{a,j}}{p_{b,j}}r_{j}}{n(n-1)}.
\end{align*}
We bound the value of each of the three terms individually to get 
an upper bound on the value of the expression.
The first term can be bounded as
\begin{align*}
\sum_{j=1}^{k}r_j \left(\frac{1}{n}\sum_{a=1}^{n}\|p_{a,j}\|^2\right)
=&\frac{1}{n}\sum_{j=1}^{k}r_j
\left(\sum_{a=1}^{n}\|p_{a}\|^2\|q_j\|^2\right)
\\
=&\frac{1}{n}\left(\sum_{j=1}^{k}r_j\|q_j\|^2\right)
\sum_{a=1}^{n}\|p_{a}\|^2 
\\
\le &\frac{1}{n}\left(\maxdegree\sum_{j=1}^{k}r_j\right)
\sum_{a=1}^{n}\|p_{a}\|^2
\\
=& \frac{1}{n}\left(\maxdegree n\right)
\sum_{a=1}^{n}\|p_{a}\|^2
\\
<&\maxdegree\left(\frac{n\diam(P)^2}{2}\right), 
\end{align*}
where we have made use of Lemma~\ref{claim:basic-q} and the fact that 
$\sum_{a=1}^{n}\|p_{a}\|^2<\frac{n\diam(P)^2}{2}$ 
(see~\cite[Lemma\,~4.1]{abm-nodim}). 
The second term can be re-written as
\begin{align*}
\sum_{\substack{a,b \in [n]\\ a < b}}
\sum_{i, j \in [k]}  \frac{2 \dotp{p_{a,i}}{p_{b,j}} r_i r_j}{n(n-1)} 
=&
\sum_{i, j \in [k]}  
\frac{2 r_i r_j}{n(n-1)}\left(  \sum_{\substack{a,b \in [n]\\ a < b}}
\dotp{p_{a,i}}{p_{b,j}} \right)
\\
=&
\sum_{i, j \in [k]} \frac{2 r_i r_j}{n(n-1)}\left( \sum_{\substack{a,b \in [n]\\ a < b}} 
\left\langle  p_a \otimes q_i, p_b \otimes q_j \right\rangle 
\right)
\\
=&
\sum_{i, j \in [k]}  \frac{2 r_i r_j}{n(n-1)}\left( \sum_{\substack{a,b \in [n]\\ a < b}} 
\left\langle p_a, p_b\right\rangle
\dotp{q_i}{q_j}
\right)
\\
=&
\left( \sum_{i, j \in [k]} \frac{2 \dotp{q_i}{q_j} r_i r_j}{n(n-1)}  \right)
\sum_{\substack{a,b \in [n]\\ a < b}} \dotp{p_a}{p_b} 
\\
=&
\frac{2}{n(n-1)}\left( \sum_{i, j \in [k]} \dotp{q_i}{q_j} r_i r_j  \right)
\sum_{\substack{a,b \in [n]\\ a < b}} \dotp{p_a}{p_b}.
\end{align*}
The expression $\sum_{i, j \in [k]} \dotp{q_i}{q_j} r_i r_j$ can be further simplified as
\begin{align*}
\sum_{i, j \in [k]} \dotp{q_i}{q_j} r_i r_j 
&= \sum_{1\le i=j \le k} \dotp{q_i}{q_j} r_i r_j + 2\left( \sum_{1\le i<j \le k} \dotp{q_i}{q_j} r_i r_j \right) \\
&= \sum_{1\le i \le k} \|q_i\| r_i^2 + 2\left(\sum_{v_iv_j\in E} (-1)\cdot r_ir_j + \sum_{v_iv_j\not\in E} 0\cdot r_ir_j\right)\\
&= \sum_{1\le i \le k} \mathrm{degree}(v_i) r_i^2 + \sum_{v_iv_j\in E} -2r_ir_j\\
&= \sum_{v_iv_j\in E} r_i^2 + r_j^2 -2r_ir_j\\
&= \sum_{v_iv_j\in E} (r_i-r_j)^2.
\end{align*}
where we have again made use of Lemma~\ref{claim:basic-q}.
Substituting, the second term becomes 
\[
\frac{2}{n(n-1)}\left( \sum_{(v_i,v_j)\in E} (r_i-r_j)^2  \right)
\sum_{\substack{a,b \in [n]\\ a < b}} \dotp{p_a}{p_b}<0,
\]
since we can use $c(P)=\orig$ to bound 
$
\sum_{a, b, \in [n], a < b} \dotp{p_a}{p_b} = -\frac{1}{2}\sum_{a=1}^{n} \|p_a\|^2 <0.
$
The second term is non-positive and therefore 
can be removed since the total expectation
is always non-negative.
The third term is
\begin{align*}
\sum_{\substack{a,b \in [n]\\ a < b}}
\sum_{j=1}^{k}\frac{-2\dotp{p_{a,j}}{p_{b,j}}r_{j}}{n(n-1)}
&=\sum_{\substack{a,b \in [n]\\ a < b}} \sum_{j=1}^{k}
\frac{-2\left\langle p_a\otimes q_j, p_b\otimes q_j \right \rangle r_{j}}{n(n-1)}
\\
&=\sum_{\substack{a,b \in [n]\\ a < b}}\sum_{j=1}^{k}
\frac{-2\dotp{p_a}{p_b}\|q_j\|^2 r_j }{n(n-1)}
\\
&= \left(\sum_{j=1}^{k}\|q_j\|^2 r_j\right)
\left(\sum_{\substack{a,b \in [n]\\ a < b}} 
  \frac{-2\dotp{p_a}{p_b}}{n(n-1)}
\right)
\\
&< \left(\sum_{j=1}^{k}\|q_j\|^2 r_j\right)
\left( \frac{n\diam(P)^2}{2n(n-1)}
\right)
\\
&= \left(\sum_{j=1}^{k}\|q_j\|^2 r_j\right)
\frac{\diam(P)^2}{2(n-1)}
\\
&< \frac{n\maxdegree\diam(P)^2}{2(n-1)}.
\end{align*}
Collecting the three terms, the expression is upper bounded by
\[
\frac{\diam(P)^2 \maxdegree n }{2}  
+ \frac{\diam(P)^2\maxdegree n}{2(n-1)}
=\frac{\diam(P)^2 \maxdegree n }{2} \left(1+\frac{1}{n-1}\right)
=\frac{\diam(P)^2 \maxdegree n^2 }{2(n-1)}  ,
\]
which bounds the expectation by
\begin{align*}
\frac{1}{n^2}\left(\frac{\diam(P)^2 \maxdegree n^2 }{2(n-1)}
\right)
=\frac{\diam(P)^2 \maxdegree}{2(n-1)}.
\end{align*}
This shows that there is a traversal 
such that its centroid has norm less than 
\[
\diam(P)\sqrt{\frac{ \maxdegree}{2(n-1)}}.
\]

\paragraph{Proof of ~\Cref{lemma:balanced-traversal}(\ref{avg2}) (balanced case).}
For the case that $n$ is a multiple of $k$,
and $r_1=\dots=r_k=\frac{n}{k}=r$, the upper bound can be improved:
the first term in the expectation is 
\begin{align*}
\sum_{j=1}^{k}r_j \left(\frac{1}{n}\sum_{a=1}^{n}\|p_{a,j}\|^2\right)
&=\frac{r}{n}\sum_{j=1}^{k} \sum_{a=1}^{n}\|p_{a,j}\|^2
\\
&=\frac{r}{n}\sum_{j=1}^{k} \sum_{a=1}^{n} \|p_{a}\|^2 \|q_j\|^2
\\
&=\frac{r}{n}\left(\sum_{j=1}^{k}\|q_j\|^2\right) 
\sum_{a=1}^{n} \|p_{a}\|^2
\\
&=\frac{r}{n}2\numedges\sum_{a=1}^{n} \|p_{a}\|^2
\\
&<\frac{r}{n}2\numedges\left(\frac{n\diam(P)^2}{2} \right)
\\
&\le r\numedges\diam(P)^2,
\end{align*}
The second term is zero, and
the third term is less than
\begin{align*}
\left(\sum_{j=1}^{k}\|q_j\|^2 r_j\right) \frac{\diam(P)^2}{2(n-1)}
&=r\left(\sum_{j=1}^{k}\|q_j\|^2 \right)\frac{\diam(P)^2}{2(n-1)}
\\
&=2r\numedges\frac{\diam(P)^2}{2(n-1)}
\\
&=\frac{r\numedges\diam(P)^2}{(n-1)}.
\end{align*}
The expectation is upper bounded as 
\begin{align*}
n^2\av\left(\|c(X)\|^2\right) &<  r\numedges\diam(P)^2 
+ \frac{r\numedges\diam(P)^2}{(n-1)}
\\
\implies \av\left(\|c(X)\|^2\right) &< 
\frac{r\numedges\diam(P)^2 }{n^2}\left(1 + \frac{1}{n-1} \right)
\\
&=\frac{r\numedges\diam(P)^2}{n(n-1)}
=\frac{\numedges\diam(P)^2}{k(n-1)},
\end{align*}
which shows that there is at least one balanced traversal $X$ whose centroid
has norm less than 
\[\sqrt{\frac{\numedges}{k(n-1)}}\diam(P),\]
as claimed.
\end{proof}

\section{Efficient no-dimensional Tverberg Theorem}
\label{section:theorem1}

In this section we prove the results of \Cref{theorem:tverberg}:
\ndtverberg*

\subsection{Proof of \Cref{theorem:tverberg}(\ref{tverberg1})}
\label{subsection:tverberg1}

We lift the points of $P$ to $P_1,\dots,P_n$ using a graph $\qgraph$ and 
the associated vectors $q_1,\dots,q_k$ as in Section~\ref{subsection:averaging}.
The centroid $c(P_a)$ coincides with the origin, for $a \in [n]$.
Applying Lemma~\ref{lemma:balanced-traversal}, there is a traversal 
$T:= T_1 \otimes q_1 \cup \dots \cup T_k \otimes q_k$ of the lifted
points, with $|T_1|=r_1, |T_2|=r_2, \dots , |T_k|=r_k$,
such that its centroid has norm at most $\delta$.

We show that there is a ball of bounded radius 
that intersects the convex hull of each $T_i$.
Let $\alpha_1=r_1/n,\dots,\alpha_k=r_k/n$ be 
positive real numbers.
The centroid of $T$, $c(T)$, can be written as 
\[
c(T) = \frac{1}{n}\sum_{i=1}^{k} \sum_{p\in T_i} p \otimes q_i
=\sum_{i=1}^{k} \frac{1}{n}\left(\sum_{p \in T_i} p \right)\otimes q_i
=\sum_{i=1}^{k} \frac{r_i}{n}\left(\frac{1}{r_i}\sum_{p \in T_i} p \right)\otimes q_i
=\sum_{i=1}^{k} \alpha_i c_i\otimes q_i,
\]
where $c_i = c(T_i)$ denotes the centroid of $T_i$, for $i \in [k]$.
Using Equation~\eqref{equation:tree-product},
\begin{align}
\|c(T)\|^2 &= 
\left\| \sum_{i=1}^{k} \alpha_i c_i\otimes q_i \right\|^2
=\sum_{v_i v_j\in E} \|\alpha_i c_i - \alpha_j c_j\|^2.
\label{equation:centroid-norm}
\end{align}
Let $x_1=\alpha_1 c_1, x_2=\alpha_2 c_2, \dots, x_k=\alpha_k c_k$.
Then,
\[
\sum_{i=1}^{k} x_i = \sum_{i=1}^{k} \alpha_i c_i =  
\sum_{i=1}^{k} \frac{r_i}{n}\left( \frac{1}{r_i}\sum_{p\in T_i} p\right)
=\frac{1}{n}\sum_{j=1}^{n} p_j = \orig,
\]
so the centroid of $\{ x_1,\dots,x_k \}$ coincides with the origin.
Using  $\| c(T) \| < \delta$ and Equation~\eqref{equation:centroid-norm}, 
\begin{align*}
\sum_{v_i v_j\in E} \|x_i - x_j\|^2
=\sum_{v_i v_j\in E} \|\alpha_i c_i - \alpha_j c_j\|^2
<\delta^2.
\end{align*}

We bound the distance from $x_1$ to every other $x_i$.
For each $i\in [k]$, we associate to $x_i$ the node $v_i$  in $\qgraph$.
Let the shortest path from $v_1$ to $v_j$ in $\qgraph$ be denoted by 
$(v_1, v_{i_1}, v_{i_2}, \dots, v_{i_z}, v_{j})$. 
This path has length at most $\gdiam$.
Using the triangle inequality and the Cauchy-Schwarz inequality,

\begin{align}
\|x_1 - x_j\| &\leq
\|x_1 - x_{i_1}\| + \|x_{i_1} - x_{i_2}\| + \dots + \|x_{i_z} - x_j\|\nonumber\\
&\leq \sqrt{\gdiam} \sqrt{\|x_1 - x_{i_1}\|^2 + 
\|x_{i_1} - x_{i_2}\|^2 + \dots + \|x_{i_z} - x_j\|^2}\nonumber\\
& \leq \sqrt{\gdiam}\sqrt{ \sum_{v_iv_j \in E} \|x_i - x_j\|^2}
< \sqrt{\gdiam}\delta.
\label{equation:gen-radius-bound}
\end{align}
Therefore, the ball of radius $\beta:=\sqrt{\gdiam}\delta$ centered at $x_1$
covers the set $\{x_1, \dots, x_k\}$.
That means, the ball covers the convex hull of $\{x_1, \dots, x_k\}$
and in particular contains the origin.
Using the triangle inequality, the ball of radius $2\beta$ 
centered at the origin contains $\{x_1, \dots, x_k\}$.
Then, the norm of each $x_i$ is at most $2\beta$,
which implies that the norm of each $c_i$ is at most $2\beta/\alpha_i$.
Therefore, the ball of radius 
\[
\frac{2\beta}{\mathrm{min}_i \alpha_i} 
= \frac{2n\sqrt{\gdiam}\delta}{\mathrm{min}_i r_i}
\]
centered at $\orig$ contains the set $\{ c_1,\dots,c_k \}$.
Substituting the value of $\delta$ from 
Lemma~\ref{lemma:balanced-traversal},  
the ball of radius
\[
\frac{2n\sqrt{\gdiam}}{\mathrm{min}_i r_i} 
\sqrt{\frac{\maxdegree}{2(n-1)}}\diam(P) 
=\frac{n\diam(P)}{\mathrm{min}_i r_i} \sqrt{\frac{2\gdiam\maxdegree}{n-1}}
\]
centered at $\orig$ covers the set $\{ c_1,\dots,c_k \}$.

\paragraph{Optimizing the choice of $\qgraph$.}
The radius of the ball has a term $\sqrt{\gdiam \maxdegree}$ 
that depends on the choice of $\qgraph$.
For a path graph this term has value $\sqrt{(k-1)2}$. For a star graph,
that is, a tree with one root and $k-1$ children,
this is $\sqrt{k-1}$.
If $\qgraph$ is a balanced $s$-ary tree, then the Cauchy-Schwarz inequality
in Equation~\eqref{equation:gen-radius-bound}
can be modified to replace $\gdiam$ by the height of the tree.
Then, the term is $\sqrt{\ceil{\log_s k}(s+1)}$,
which is minimized for $s=4$.
For this choice of $\qgraph$, the radius is bounded by 
\[
\frac{n\diam(P)}{\mathrm{min}_i r_i} \sqrt{\frac{10\ceil{\log_4 k} }{n-1}},
\] 
as claimed.
\qed

\subsection{Proof of \Cref{theorem:tverberg}(\ref{tverberg2}) (balanced partition)}
\label{subsection:tverberg2}
For the case $n=rk$ and $r_1=\dots=r_k=r$, we give a better
bound for the radius of the ball containing the centroids 
$c_1,\dots,c_k$.
In this case, 
we have $\alpha_1=\alpha_2=\dots=\alpha_k=r/n=1/k$.
Then, Equation~\eqref{equation:centroid-norm} is
\begin{align*}
\|c(T)\|^2=\sum_{v_i v_j\in E} \|\alpha_i c_i - \alpha_j c_j\|^2
=\frac{1}{k^2} \sum_{v_i v_j\in E} \| c_i -  c_j\|^2.
\end{align*}
Since $\| c(T) \| < \gamma$, we get
\begin{equation}
\sum_{v_i v_j\in E} \| c_i -  c_j\|^2 < k^2\gamma^2.
\label{equation:balanced-centroid-norm}
\end{equation}
Similar to the general case, we bound the distance from $c_1$ 
to any other centroid $c_j$.
For each $i$, we associate to $c_i$ the node $v_i$ in $\qgraph$.
There is a path of length at most $\gdiam$ from $v_1$ to any other
node.
Using the Cauchy-Schwarz inequality and substituting the value of $\gamma$ 
from Lemma~\ref{lemma:balanced-traversal}, 
we get
\begin{align}
\|c_1-c_j\|&\le \sqrt{\gdiam}
\sqrt{\sum_{v_i v_j\in E} \| c_i -  c_j\|^2 }
<\sqrt{\gdiam}k\gamma 
\nonumber\\
& = \sqrt{\frac{\gdiam \numedges}{k(n-1)}}k\diam(P)
\\ 
&=\sqrt{\frac{k}{n-1}}\sqrt{\gdiam \numedges}\diam(P).
\label{equation:balanced-radius-bound}
\end{align}
Therefore, a ball of radius 
\[
\sqrt{\frac{k}{n-1}}\sqrt{\gdiam \numedges}\diam(P)\]
centered at $c_1$
contains the set $c_1,\dots,c_k$.
The factor $\sqrt{\gdiam \numedges}$
is minimized when $\qgraph$ is a star graph, which is a tree.
We can replace the term $\gdiam$ by the height of the tree.
Then, the ball containing  $c_1,\dots,c_k$ has radius
\[
\sqrt{\frac{k(k-1)}{n-1}}\diam(P),
\]
as claimed. 
\qed

\paragraph{As balanced as possible.}
When $k$ does not divide $n$, but we still want a balanced partition,
we take any subset of $n_0=k\floor{n/k}$ points of $P$
and get a balanced Tverberg partition on the subset.
Then, we add the removed points one by one to the sets of the partition,
adding at most one point to each set.
As shown above, there is a ball of radius less than 
\[
\sqrt{\frac{k(k-1)}{n_0-1}}\diam(P)
\]
that intersects the convex hull of each set in the partition.
Noting that 
\[
\frac{1}{\sqrt{n_0 - 1}} \leq 
\sqrt{\frac{k+2}{k}}\frac{1}{\sqrt{n-1}},
\]
a ball of radius less than 
\[
\sqrt{\frac{(k+2)(k-1)}{(n-1)}}\diam(P)
\]
intersects the convex hull of each set of the partition.

\subsection{Proof of \Cref{theorem:tverberg}(\ref{tverberg3})(computing the Tverberg partition)}
\label{subsection:tverberg3}

We now give a deterministic algorithm to compute 
no-dimensional Tverberg partition $T_1,\dots,T_k$.
The algorithm is based on the method of conditional expectations.
First, in Section~\ref{subsubsection:algo-general} 
we give an algorithm for the general case
when the sets in the partitions are constrained to have
given sizes $r_1,\dots,r_k$.
The choice of $\qgraph$ is crucial for the algorithm.

The balanced case of $r_1=\dots=r_k$ has a better radius bound
and uses a different graph $\qgraph$.
The algorithm for the general case also extends to the balanced case 
with a small modification, that we discuss 
in Section~\ref{subsubsection:algo-balanced}.
We get the same runtime in either case.

\subsubsection{Algorithm for the general case}
\label{subsubsection:algo-general}

As before, the input is a set of $n$ points $P\subset \R^d$ and $k$ positive integers
$r_1,\dots,r_k$ satisfying $\sum_{i=1}^{k} r_i=n$.
Using tensor product construction, 
each point of $P$ is lifted implicitly using the vectors $\{q_1,\dots,q_k\}$ 
to get the set $\{P_1,\dots,P_n\}$.
We then compute the required traversal of $\{P_1,\dots,P_n\}$ using the method
of conditional expectations~\cite{as-probabilistic}, 
the details of which can be found below.
Grouping the points of the traversal according to the lifting vectors
used gives us the required partition.
We remark that in our algorithm, we do not explicitly lift any vector
using the tensor product, thereby avoiding costs associated with working 
on vectors in $d\numedges$ dimensions.

We now describe a procedure to find a traversal that corresponds
to a desired partition of $P$.
We go over the points in $\{P_1,\dots,P_n \}$ iteratively in reverse order
and find the 
traversal $Y=(y_1\in P_1,\dots,y_n\in P_n)$ point by point.
More precisely, we determine $y_n$ in the first step, 
then $y_{n-1}$ in the second step, and so on.
In the first step, we go over all points of $P_n$ and select any point 
$y_n\in P_n$ that satisfies 
$\av\left(\|c(x_1,x_2,\dots,x_{n-1},y_n)\|^2\right)
\le \av\left(\|c(x_1,x_2,\dots,x_{n-1},x_n)\|^2\right)$.
For the general step, suppose we have already selected the points $\{ y_{s+1},y_{s+2},\dots,y_n \}$.
To determine $y_s$, we choose any point from $P_s$ that achieves
\begin{equation}
\av\left(\|c(x_1,\dots,x_{s-1},y_{s},y_{s+1},\dots,y_n)\|^2\right)\le 
\av\left(\|c(x_1,\dots,x_{s},y_{s+1},\dots,y_n)\|^2\right).
\label{equation:conditional-ex}
\end{equation}
The last step gives the required traversal.
We expand the expectation as 
\begin{align*}
&\av(\|c(x_1,x_2,\dots,x_{s-1},y_{s},\dots,y_n)\|^2)\\
&=\av\left(\left\| \frac{1}{n}
\left(\sum_{i=1}^{s-1}x_i + \sum_{i=s}^{n}y_i \right)\right\|^2\right)
=\frac{1}{n^2}\av\left( \left\|\left(\sum_{i=1}^{s-1}x_i + \sum_{i=s+1}^{n}y_i\right)+y_s\right\|^2 \right)
\\
&=
\frac{1}{n^2}\left(
\av\left( \left\| \sum_{i=1}^{s-1}x_i + \sum_{i=s+1}^{n}y_i\right\|^2 \right) 
+ \|y_s\|^2 + 2\left\langle y_s,
\av\left(\sum_{i=1}^{s-1}x_i + \sum_{i=s+1}^{n}y_i\right) 
\right\rangle \right)
\\
&=
\frac{1}{n^2}\left(
\av\left( \left\|\sum_{i=1}^{s-1}x_i + \sum_{i=s+1}^{n}y_i\right\|^2 \right)
+ \|y_s\|^2 + 2\left\langle y_s,
\av\left(\sum_{i=1}^{s-1}x_i\right) + \sum_{i=s+1}^{n}y_i 
\right\rangle \right).
\end{align*}
We pick a $y_s$ for which  
$\av(\left\|c(x_1,x_2,\dots,x_{s-1},y_{s},\dots,y_n)\right\|^2)$ 
is at most the average over all choices of $y_s\in P_s$.
As the term
$\av\left( \left\|\sum_{i=1}^{s-1}x_i + \sum_{i=s+1}^{n}y_i\right\|^2 \right)$ 
is constant 
over all choices of $y_s$, and the factor $\frac{1}{n^2}$ 
is constant, we can remove them from consideration.
We are left with
\begin{equation}
\|y_s\|^2 + 2\left\langle y_s,
\av\left(\sum_{i=1}^{s-1}x_i\right) + \sum_{i=s+1}^{n}y_i 
\right\rangle
=
\|y_s\|^2 + 2\left\langle y_s,
\av\left(\sum_{i=1}^{s-1}x_i\right)\right\rangle 
+ 2\dotp{y_s}{\sum_{i=s+1}^{n}y_i}. 
\label{equation:expectedsum}
\end{equation}
Let $y_s = p_s \otimes q_i$ without loss of generality.
The first term is 
\[
\|y_s\|^2= \|p_s \otimes q_i\|^2 = \|p_s\|^2\|q_i\|^2.
\]
Let $r'_1,\dots,r'_k$ be the number of elements of $T_1,\dots,T_k$
that are yet to be determined. 
In the beginning, $r'_i=r_i$ for each $i$.
Using the coefficients from Section~\ref{subsection:averaging},
$\av\left(\sum_{i=1}^{s-1}x_i\right)$ can be written as 
\begin{align*}
\av\left(\sum_{i=1}^{s-1}x_i\right)
&=\sum_{i=1}^{s-1}\sum_{j=1}^{k}p_{i,j}\frac{r'_j}{s-1}
\\
&=\sum_{j=1}^{k}\frac{r'_j}{s-1}\sum_{i=1}^{s-1}p_{i,j}
\\
&=\sum_{j=1}^{k}\frac{r'_j}{s-1}\sum_{i=1}^{s-1} p_i \otimes q_j
\\
&=\frac{1}{s-1}\sum_{j=1}^{k}r'_j \left(\sum_{i=1}^{s-1} p_i\right)
\otimes q_j
\\
&=\left(\frac{1}{s-1}\sum_{i=1}^{s-1} p_i\right)
\otimes \left(\sum_{j=1}^{k}r'_j  q_j\right)
\\
&=c_{s-1} \otimes \left(\sum_{j=1}^{k}r'_j  q_j\right),
\end{align*}
where $c_{s-1}=\frac{\sum_{i=1}^{s-1} p_i}{s-1}$ is the centroid of 
the first $(s-1)$ points.
Using this, the second term can be simplified as
\begin{align*}
2\left\langle y_s,\av\left(\sum_{i=1}^{s-1}x_i\right)\right\rangle
&=2 \left\langle p_s \otimes q_i, c_{s-1} \otimes 
\left(\sum_{j=1}^{k}r'_j q_j\right) \right\rangle
\\
&=2 \left\langle p_s, c_{s-1} \right\rangle
\left\langle q_i, \sum_{j=1}^{k}r'_j q_j \right\rangle
\\
&=2\dotp{p_s}{c_{s-1}} 
\left(r'_i\|q_i\|^2-\sum_{v_iv_j\in E}r'_j\right)
\\
&=\dotp{p_s}{c_{s-1}} R_i,
\end{align*}
where $R_i=2\left(r'_i\|q_i\|^2-\sum_{v_iv_j\in E} r'_j\right)$.
The third term is $2\left\langle y_s,\sum_{j=s+1}^{n}y_j \right\rangle$.
Let $y_j=p_j\otimes q_{m_j}$ for $s+1\le j \le n$.
The term can be simplified to
\begin{align*}
2\left\langle y_s,\sum_{j=s+1}^{n}y_j \right\rangle
&=2\sum_{j=s+1}^{n}\dotp{y_s}{y_j}
\\
&=2\sum_{j=s+1}^{n} \left\langle
p_s  \otimes q_i, p_j \otimes q_{m_j} \right\rangle
\\
&=2\sum_{j=s+1}^{n} \dotp{p_s}{p_j} \dotp{q_i}{q_{m_j}}
\\
&= 2 \left\langle p_s, \sum_{p\in T_i} p \|q_i\|^2 - 
\sum_{j: v_i v_j\in E} \sum_{p\in T_j} p
\right\rangle 
\\
&=\left\langle p_s, 2\left( \|q_i\|^2 \sum_{p\in T_i} p - 
\sum_{j: v_i v_j\in E} \sum_{p\in T_j} p 
\right)
\right\rangle
\\
&=\dotp{p_s}{U_i}, 
\end{align*}
where $U_i=2\left( \|q_i\|^2 \sum_{p\in T_i} p - 
\sum_{j: v_i v_j\in E} \sum_{p\in T_j} p 
\right)$
and $T_j$ is the set of points in $p_{s+1},\dots,p_n$ that was lifted
using $q_j$ in the traversal.
Collecting the three terms, we get 
\begin{equation}
\|p_s\|^2 \|q_i\|^2 + \dotp{p_s}{c_{s-1}} R_i + \dotp{p_s}{U_i} 
=\alpha_s N_i + \beta_s R_i + \dotp{p_s}{U_i},
\label{equation:objective}
\end{equation}
with 
\[
N_i=\|q_i\|^2, \alpha_s:=\|p_s\|^2, 
\beta_s:=\dotp{p_s}{c_{s-1}}.
\]
The terms $\alpha_s,\beta_s,p_s$ are fixed for iteration $s$.

\paragraph{Algorithm.}
For each $s\in[1,n]$, we pre-compute the following:
\begin{itemize}
\item prefix sums $\sum_{a=1}^{s} p_{a}$, and

\item $\alpha_s$ and $\beta_s$.
\end{itemize}
With this information, it is straightforward to compute a traversal 
in $O(ndk)$ time by evaluating the expression for each choice of $p_s$.
We describe a more careful method that reduces this time
to $O(nd\ceil{\log  k})$.

We assume that $\qgraph$ is a balanced $\mu$-ary tree.
Recall that each node $v_i$ of $\qgraph$ corresponds to a vector $q_i$.
We augment $\qgraph$ with the following 
additional information for each node $v_i$:
\begin{itemize}
\item $N_i=\|q_i\|^2$: recall that this is the degree of $v_i$.

\item $N^{st}_i$: this is the average of the $N_j$ over all elements $v_j$
in the subtree rooted at $v_i$.
Since the subtree contains both internal nodes and leaves, this value is not $\mu+1$.

\item $r'_i$: as before, this is the number of elements of the set $T_i$ 
of the partition that are yet to be determined.
We initialize each $r'_i:=r_i$. 

\item $R_i=2\left(r_i' N_i-\sum_{v_iv_j\in E}r'_j \right)$, 
that is, $r'_i N_i$ minus the $r'_j$ for each node $v_j$ that is a neighbor 
of $v_i$ in $\qgraph$, times two.
We initialize $R_i:=0$.

\item $R^{st}_i$: this is the average of the $R_j$ values over all 
nodes $v_j$ in the subtree rooted at $v_i$.
We initialize this to $0$.

\item $T_i, u_i$: as before, $T_i$ is the set of vectors of the traversal 
that was lifted using $q_i$. 
The sum of the vectors of $T_i$ is $u_i$.
We initialize $T_i=\varnothing$ and $u_i=\orig$.

\item $U_i=2\left( \|q_i\|^2 \sum_{p\in T_i} p - 
\sum_{j: v_i v_j\in E} \sum_{p\in T_j} p \right)
=2\left(u_i N_i-\sum_{v_iv_j\in E} u_j \right)$,  initially
$\orig$.

\item $U^{st}_i$: this is the average of the vectors $U_j$ for all 
nodes $v_j$ in the subtree of $v_i$.
$U^{st}$ is initialized as $\orig$ for each node.
\end{itemize}
Additionally, each node contains pointers to its children and parents.
The quantities $N^{st},R^{st}$ are initialized in one pass over $\qgraph$. 

In step $s$, we find an $i\in [k]$ for which 
Equation~\eqref{equation:objective} has a value at
most the average 
\begin{align*}
A_s &=\frac{1}{k}\left(\sum_{i=1}^{k} \alpha_s N_i + \beta_s R_i + \dotp{p_s}{U_i}
\right)\\
&=\frac{\alpha_s}{k} \sum_{i=1}^{k}N_i 
+  \frac{\beta_s }{k}\sum_{i=1}^{k}R_i
+ \left\langle p_s, \frac{1}{k} \sum_{i=1}^{k}U_i \right\rangle 
\\
&=\alpha_s N^{st}_1 + \beta_s R^{st}_1 + \dotp{p_s}{U^{st}_1},
\end{align*}
where $v_1$ is the root of $\qgraph$.
Then $y_s$ satisfies Equation~\eqref{equation:conditional-ex}.

To find such a node $v_i$, we start at the root $v_1\in \qgraph$.
We compute the average $A_s$
and evaluate Equation~\eqref{equation:objective} at $v_1$.
If the value is at most $A_s$, we report success, setting $i=1$.
If not, then for at least one child $v_m$ of $v_1$,
the average for the subtree is less than $A_s$, that is, 
\[
\alpha_s N^{st}_m + \beta_s R^{st}_m + \dotp{p_s}{U^{st}_m} < A_s.
\]
We scan the children of $v_1$ and compute the expression 
to find such a node $v_m$.
We recursively repeat the procedure
on the subtree rooted at $v_m$, and so on, until we find
a suitable node.
There is at least one node in the subtree at $v_m$ for which 
Equation~\eqref{equation:objective} evaluates to less than $A_s$,
so the procedure is guaranteed to find such a node.

Let $v_i$ be the chosen node.
We update the information stored in the nodes of the tree
for the next iteration.
We set
\begin{itemize}
\item $r'_i:=r'_i-1$ and $R_i:=R_i - 2N_i$.
Similarly we update the $R_i$ values for neighbors of $v_i$.

\item We set $T_i:=T_i\cup \{p_s\}$, $u_i:=u_i+p_s$ and $U_i:=U_i + 2 N_i p_s$.
Similarly we update the $U_i$ values for the neighbors.

\item For each child of $v_i$ and each ancestor of $v_i$ on the
path to $v_1$, we update $R^{st}$ and $U^{st}$.
\end{itemize}
After the last step of the algorithm, we get the required partition $T_1,\dots,T_k$ of $P$.
This completes the description of the algorithm.

\paragraph{Runtime.}
Computing the prefix sums and $\alpha_s,\beta_s$ takes $O(nd)$ time in total.
Creating and initializing the tree takes $O(k)$ time.
In step $s$, computing the average $A_s$ and 
evaluating Equation~\eqref{equation:objective} takes $O(d)$ time per node.
Therefore, computing Equation~\eqref{equation:objective} for the children 
of a node takes $O(d\mu)$ time, as $\qgraph$ is a $\mu$-ary tree.
In the worst case, the search for $v_i$ starts at the root and goes to a leaf,
exploring $O(\mu\ceil{\log_\mu k})$ nodes in the process and 
hence takes $O(d\mu\ceil{\log_\mu k})$ time.
For updating the tree, the information local to $v_i$ and its neighbors
can be updated in $O(d\mu)$ time.
To update $R^{st}$ and $U^{st}$ we travel on the path to the root, which 
can be of length $O(\ceil{\log_\mu k})$ in the worst case, and hence 
takes $O(d\mu\ceil{\log_\mu k})$ time.
There are $n$ steps in the algorithm, 
each taking $O(d\mu \ceil{\log_\mu k})$ time.
Overall, the running time is $O(nd\mu\ceil{\log_\mu k})$ which is minimized 
for a $3$-ary tree.
\qed

\subsubsection{Algorithm for the balanced case}
\label{subsubsection:algo-balanced}

In the case of balanced traversals, 
$\qgraph$ is chosen to be a star graph
as was done in Section~\ref{subsection:tverberg2}.
Let $q_1$ correspond to the root of the graph and $q_2,\dots,q_k$
correspond to the leaves.
In this case the objective function 
$
\alpha_s N_i + \beta_s R_i + \dotp{p_s}{U_i}
$
from the general case
can be simplified:
\begin{itemize}
\item for $i=2,\dots,k$, we have that
$R_i=2\left(r'_i\|q_i\|^2-\sum_{v_iv_j\in E}r'_j\right)
=2\left(r'_i-r'_1\right)$.
Also, we have 
\[
U_i =2\left( \sum_{p\in T_i} p \|q_i\|^2 - 
\sum_{  \substack{p\in T_j \\ v_i v_j\in E} } p \right)
=2\left(\sum_{p\in T_i} p - \sum_{p\in T_1} p \right).
\]

\item for the root $v_1$,
$R_i=2\left(r'_i\|q_i\|^2-\sum_{v_iv_j\in E}r'_j\right)
=2\left((k-1)r'_1-\sum_{j=2}^{k}r'_j\right)$.
Also, we can write 
\[
U_i =2\left( \|q_i\|^2 \sum_{p\in T_i} p - 
\sum_{ \substack{p\in T_j \\ v_i v_j\in E} } p \right)
=2\left((k-1)\sum_{p\in T_i} p - \sum_{p\in \cup_{j=2}^{k} T_j} p \right).
\]
\end{itemize}

We augment $\qgraph$ with information at the nodes
just as in the general case, and use the algorithm to compute the traversal.
However, this would need time $O(nd\mu \ceil{\log_\mu k})=O(ndk)$ 
since $\mu=(k-1)$ and the height of the tree is 1.
Instead, we use an auxiliary balanced ternary rooted tree $\tree$ for the algorithm,
that contains $k$ nodes, each associated to one of the vectors
$q_1,\dots,q_k$ in an arbitrary fashion.
We augment the tree with the same information as in the general case,
but with one difference: for each node $v_i$,
the values of $R_i$ and $U_i$ are updated according to the 
adjacency in $\qgraph$ and not using the edges of $\tree$.
Then we can simply use the algorithm for the general case to get a balanced
partition.
The modification does not affect the complexity of the algorithm.

\section{No-dimensional Colorful Tverberg Theorem}
\label{section:theorem2}

In this section, we prove \Cref{theorem:colorful-tverberg}
and give an algorithm to compute a
colorful partition.
\ndcolorfultverberg*
The general approach is similar to that in Section~\ref{section:theorem1},
but the lifting and the averaging steps are modified.

\subsection{Proof of \Cref{theorem:colorful-tverberg}(\ref{ctverberg1})(colorful partition)}
\label{subsection:ctverberg1}
Let $q_1,\dots,q_k$ be the set of vectors derived from 
a graph $\qgraph$ as in Section~\ref{section:setup}.
Let $\pi=(1,2,\dots,k)$ be a permutation of $[k]$.
Let $\pi_i$ denote the permutation obtained by cyclically shifting 
the elements of $\pi$
to the left by $i - 1$ positions.
That means, 
\begin{align*}
\pi_1&= (1,2,\dots,k-1,k)\\
\pi_2&= (2,3,\dots,k,1)\\
\pi_3&= (3,4,\dots,1,2)\\
&\dots \\
\pi_{k}&= (k,1,2,\dots,k-2,k-1).
\end{align*}
Let $P_1,\dots,P_n$ be point sets in $\R^d$, each of cardinality $k$.
Let $P_1=\{p_{1,1},\dots,p_{1,k} \}$ and
$P_{1,j}=\sum_{i=1}^{k} p_{1,i}\otimes q_{\pi_j(i)}$
be the point in $\R^{d\numedges}$
that is formed by taking tensor products of the points of $P_1$ with 
the permutation $\pi_j$ of $q_1,\dots,q_k$ and adding them up, 
for $j\in [k]$.
For instance, 
$P_{1,4}=p_1\otimes q_4 + p_2\otimes q_5 + \dots + p_k\otimes q_3$.
This gives us a set of $k$ points $P'_1=\{P_{1,1},\dots,P_{1,k} \}$.
Furthermore,
\begin{align}
\sum_{j=1}^{k} P_{1,j} &= \sum_{j=1}^{k} 
\sum_{i=1}^{k} p_{1,i}\otimes q_{\pi_j(i)}
= \sum_{i=1}^{k}  \sum_{j=1}^{k} p_{1,i}\otimes q_{\pi_j(i)}
\nonumber\\
&=\sum_{i=1}^{k}  p_{1,i} \otimes \left( \sum_{j=1}^{k}  q_{\pi_j(i)} \right)
= \sum_{i=1}^{k}  p_{1,i}  \otimes \left( \sum_{m=1}^{k} q_{m} \right) 
\nonumber\\
&=\orig,
\label{equation:colorful-centroid}
\end{align}
so the centroid of $P'_1$ coincides with the origin.
In a similar manner, for $P_2,\dots,P_n$, 
we construct the point sets $P'_2,\dots,P'_n$,
respectively, each of whose centroids coincides with the origin.
We now upper bound $\diam(P'_1)$.
For any point $P_{1,i}$, using Equation~\eqref{equation:tree-product}
we can bound the squared norm as
\begin{align*}
\|P_{1,i}\|^2& 
= \left\| \sum_{m=1}^{k} p_{1,m} \otimes q_{\pi_i(m)} \right\|^2
= \left\| \sum_{l=1}^{k} p_{1,\pi_i^{-1}(l)} \otimes q_{l}  \right\|^2\\
&=\sum_{v_lv_m\in E} \left\|
p_{1,\pi_i^{-1}(l)} - p_{1,\pi_i^{-1}(m)} \right \|^2
\\
&\le \sum_{v_lv_m\in E} \diam(P_1)^2 
\le \numedges \diam(P_1)^2,
\end{align*}
so that 
$\|P_{1,i}\| \le \sqrt{\numedges} \diam(P_1)$. 
For any two points $P_{1,i},P_{1,j}\in P'_1$,
\begin{align*}
\|P_{1,i}-P_{1,j}\| & \le \|P_{1,i}\| + \|P_{1,j}\|
\\
&\le \sqrt{\numedges} \diam(P_1) + \sqrt{\numedges} \diam(P_1)
\\
&= 2\sqrt{\numedges} \diam(P_1).
\end{align*}
Therefore, $\diam(P'_1)\le 2\sqrt{\numedges}\diam(P_1)$.
We get a similar relation for each $P'_{i}$.
Now we apply the no-dimensional Colorful \Caratheodory 
theorem from~\cite[Theorem~2.1]{abm-nodim}
on the sets $P'_1,\dots,P'_n$: there is a traversal
$X=\{ x_1\in P'_1,\dots ,x_n\in P'_n \}$ such that 
\begin{align*}
\|c(X)\| &< \delta=\frac{\mathrm{max}_{i} \diam(P'_i) }{\sqrt{2n}}\\
&\le \frac{2\sqrt{\numedges}}{\sqrt{2n}} \mathrm{max}_{i} \diam(P_i)
= \sqrt{\frac{2k\numedges}{N}} \mathrm{max}_{i} \diam(P_i).
\end{align*}
Let $x_1=P_{1,i_1},\dots,x_n=P_{n,i_n}$ where $1\le i_1,\dots,i_n\le k$
are the indices of the permutations of $\pi$ that were used.
That means, 
\[
x_j=P_{j,i_j}=\sum_{l=1}^{k} p_{j,l}\otimes q_{\pi_{i_j}(l)}
=\sum_{m=1}^{k} p_{j,\pi_{i_j}^{-1}(m)}\otimes q_{m}.
\]
Then, we define the colorful sets $A_1,\dots,A_k$ as:
\[
A_j:=\left\{p_{1,\pi_{i_1}^{-1}(i)}, p_{2,\pi_{i_2}^{-1}(i)}, \dots 
p_{n,\pi_{i_n}^{-1}(i)} \right\},
\]
that is, $A_j$ consists of the points of $P_1,\dots,P_n$ that 
were lifted using $q_j$ for $j \in [k]$.
By definition, each $A_j$ contains precisely one point from each $P'_i$,
so it is a colorful set.
Let $c_j$ denote the centroid of $A_j$.
We expand the expression 
\begin{align*}
c(X)&=\frac{1}{n}\sum_{j=1}^{n} P_{j,i_j} 
\\
&= \frac{1}{n}\sum_{j=1}^{n} \sum_{l=1}^{k} p_{j,l} \otimes q_{\pi_{i_j} (l)}
\\
&= \frac{1}{n}\sum_{j=1}^{n} \sum_{m=1}^{k} p_{j,\pi_{i_j}^{-1} (m)} 
\otimes q_{m}
\\
&=\frac{1}{n}\sum_{m=1}^{k} \sum_{j=1}^{n} p_{j,\pi_{i_j}^{-1} (m)} 
\otimes q_{m}
\\
&= \frac{1}{n}\sum_{m=1}^{k} \left(\sum_{j=1}^{n} p_{j,\pi_{i_j}^{-1} (m)} \right)\otimes q_{m}
\\
&= \sum_{m=1}^{k} \frac{1}{n}\left(\sum_{j=1}^{n} p_{j,\pi_{i_j}^{-1} (m)} \right)\otimes q_{m}
\\
&=\sum_{m=1}^{k} c_m \otimes q_m. 
\end{align*}
Applying $\|c(X)\|^2<\delta^2 $, we get 
\begin{align*}
\left\| \sum_{m=1}^{k} c_m \otimes q_m \right\|^2 
=\sum_{v_l,v_m \in E} \|c_l-c_m\|^2 < \delta^2,
\end{align*}
where we made use of Equation~\eqref{equation:tree-product}.
Using the Cauchy-Schwarz inequality as in Section~\ref{subsection:tverberg1},
the distance from $c_1$ to any other $c_j$ is at most $\sqrt{\gdiam}\delta$.
Substituting the value of $\delta$, this is
$\sqrt{\frac{2k\gdiam\numedges}{N}} \mathrm{max}_{i} \diam(P_i)$.
Now we set $\qgraph$ as a star graph, similar to the balanced case
of Section~\ref{subsection:tverberg2} with $v_1$ as the root.
A ball of radius 
\[
\sqrt{\frac{2k(k-1)}{N}}  \mathrm{max}_{i} \diam(P_i)
\]
centered at $c_1$ contains the set $\{ c_1,\dots,c_k \}$, 
intersecting the convex hull of each $A_j$, as required.
\qed

\subsection{Proof of \Cref{theorem:colorful-tverberg}(\ref{ctverberg2})(computing the colorful partition)}
\label{subsection:ctverberg2}
The algorithm follows a similar approach as in Section~\ref{subsection:tverberg3}.
The input consists of the sets of points $P_1,\dots,P_n$. 
We use the permutations
$\pi_1,\dots,\pi_k$ of $q_1,\dots,q_k$ to (implicitly) construct 
the point sets $P'_1,\dots,P'_n$.
Then we compute a traversal of $P'_1,\dots,P'_n$ using the method
of conditional expectations.
This essentially means determining a permutation $\pi_{i_j}$
for each $P'_i$.
The permutations directly determine the colorful partition.
Once again, we do not explicitly lift any vector
using the tensor product, and thereby avoid the associated costs.

We iterate over the points of $\{P'_1,\dots,P'_n \}$ in reverse order
and find a suitable
traversal $Y=(y_1\in P'_1,\dots, y_n\in P'_n)$ point by point.
Suppose we have already selected the points $\{ y_{s+1},y_{s+2},\dots,y_n \}$.
To find $y_s\in P'_s$, it suffices to choose any point that satisfies
\begin{equation}
\av\left(\|c(x_1,\dots,x_{s-1},y_{s},y_{s+1},\dots,y_n)\|^2\right)\le 
\av\left(\|c(x_1,\dots,x_{s},y_{s+1},\dots,y_n)\|^2\right).
\end{equation}
Specifically, we find the point $y_s$ for which 
the conditional expectation expressed as 
\[
\av(\|c(x_1,x_2,\dots,x_{s-1},y_{s},\dots,y_n)\|^2) 
\]
is minimized.
As in Equation~\eqref{equation:expectedsum} from Section~\ref{subsection:tverberg3}, 
this is equivalent to determining the point that minimizes
\begin{align}
&\|y_s\|^2 + 2\left\langle y_s,
\av\left(\sum_{i=1}^{s-1}x_i\right) + \sum_{i=s+1}^{n}y_i 
\right\rangle
\\
=&
\|y_s\|^2 + 2\left\langle y_s,
\av\left(\sum_{i=1}^{s-1}x_i\right)\right\rangle 
+ 2\dotp{y_s}{\sum_{i=s+1}^{n}y_i}. 
\label{equation:expectedsum-colorful}
\end{align}

Let $y_s=\sum_{i=1}^{k} p_{s,i} \otimes q_{\pi(i)}$
for some permutation $\pi\in\{\pi_1,\dots,\pi_k  \}$.
The terms of Equation~\eqref{equation:expectedsum-colorful}
can be expanded as:
\begin{itemize}
\item first term:
\[
\|y_s\|^2 = \left\| \sum_{i=1}^{k} p_{s,i} \otimes q_{\pi(i)} \right\|^2
=\left\| \sum_{l=1}^{k} p_{s,\pi^{-1}(l)} \otimes q_{l} \right\|^2
= \sum_{v_lv_m \in E} \left\|  p_{s,\pi^{-1}(l)} -p_{s,\pi^{-1}(m)}
\right\|^2,
\]
using Equation~\eqref{equation:tree-product}.

\item second term: the expectation can be written as 
\begin{align*}
\av\left(\sum_{i=1}^{s-1}x_i\right)&
= \sum_{i=1}^{s-1} \sum_{j=1}^{k} P_{i,j}\frac{1}{k}
=\frac{1}{k}\sum_{i=1}^{s-1} \left( \sum_{j=1}^{k} P_{i,j} \right)
=\orig,
\end{align*}
as in Equation~\eqref{equation:colorful-centroid}.

\item third term: 
let $\pi_{j_{s+1}},\dots,\pi_{j_{n}}$ denote the respective permutations selected
for $P'_{s+1},\dots,P'_{n}$ in the traversal.
Then, 
\begin{align*}
\sum_{i=s+1}^{n}y_i &=\sum_{i=s+1}^{n} P_{i,j_{i}}
\\
&=\sum_{i=s+1}^{n} \sum_{l=1}^{k} p_{i,l}\otimes q_{\pi_{j_{i}}(l)}
\\
&=\sum_{i=s+1}^{n} \sum_{m=1}^{k} p_{i,\pi^{-1}_{j_{i}}(m)}\otimes q_{m}
\\
&=\sum_{m=1}^{k} \left( \sum_{i=s+1}^{n}  p_{i,\pi^{-1}_{j_{i}}(m)} \right)
\otimes q_{m}
\\
&= \sum_{m=1}^{k} \sum_{p\in A'_{m}}p \otimes q_{m},
\end{align*}
where, $A'_m\subseteq A_m$ is the colorful set whose elements 
from $P_{s+1},\dots,P_n$ have already been determined.
Let $S_m= \sum_{p\in A'_{m}}p$ for each $m=1\dots k$.
Then, the third term can be written as
\begin{align*}
2 \left\langle y_s,\sum_{i=s+1}^{n}y_i \right\rangle&=
2\left\langle \sum_{i=1}^{k} p_{s,i} \otimes q_{\pi(i)},
\sum_{m=1}^{k} S_{m} \otimes q_{m} \right\rangle \\
&= 2 \sum_{i=1}^{k}  \sum_{m=1}^{k} \left\langle  
p_{s,i} \otimes q_{\pi(i)}, S_{m} \otimes q_{m} \right\rangle \\
&=2 \sum_{l=1}^{k}  \sum_{m=1}^{k} \left\langle  
p_{s,\pi^{-1}(l)} \otimes q_{l}, S_{m} \otimes q_{m} \right\rangle
\\
&=2 \sum_{l=1}^{k}  \sum_{m=1}^{k} \left\langle  p_{s,\pi^{-1}(l)}, S_{m}
\right\rangle \dotp{q_{l}}{q_{m}} \\
&=2\sum_{m=1}^{k} \left(
\left\langle  p_{s,\pi^{-1}(m)}, S_{m} \right\rangle \|q_m\|^2-
\sum_{v_lv_m\in E} \left\langle  p_{s,\pi^{-1}(l)}, S_{m}\right\rangle
\right)\\
&=2\sum_{m=1}^{k} \left\langle
\left( p_{s,\pi^{-1}(m)}\|q_m\|^2 - \sum_{v_lv_m\in E} 
p_{s,\pi^{-1}(l)} \right),
S_{m}\right\rangle.
\end{align*}
\end{itemize}
If $\tau$ is the permutation selected in the iteration for $P'_s$,
then we update $A'_i=A'_{i}\cup \{p_{s,\tau^{-1}(i)}\}$ 
and $S_i=S_i + p_{s,\tau^{-1}(i)}$ for each $i=1,\dots,k$.

For each permutation $\pi$, the first and the third terms 
can be computed in $O(\numedges d)=O(kd)$ time.
There are $k$ permutations for each iteration, so this takes $O(k^2d)$ time
per iteration and $O(nk^2d)=O(Ndk)$ time in total for finding
the traversal.

\begin{remark}
In principle, it is possible to reduce the problem of computing a
no-dimensional Tverberg partition to the problem of computing a
no-dimensional Colorful Tverberg partition.
This can be done by arbitrarily coloring the point set into 
sets of equal size, and then using the algorithm for the colorful version.
This can give a better upper bound on the radius of the 
intersecting ball if the diameters of the colorful sets satisfy
\[
\mathrm{max}_i \diam(P_i) 
< \frac{\diam(P_1\cup P_2\cup \dots \cup P_n)}{\sqrt{2}}.
\]
However, the algorithm for the colorful version has a worse runtime
since it does not utilize the optimizations used in the regular version.
\end{remark}

\section{No-dimensional Generalized Ham-Sandwich Theorem}
\label{section:theorem3}

We prove \Cref{theorem:nodim-gghs} in this section:

\ndinterpolate*

This is a no-dimensional version of 
a generalization of the Ham-Sandwich theorem~\cite{st-sandwich}.
We briefly describe the history of the problem before
detailing the proof.

The Centerpoint theorem was proven by Rado in~\cite{rado-centerpoint}.
It states that for any set of $n$ points $P\subset \R^d$, 
there exists some point $\cp(P)\in \R^d$, called the \emph{centerpoint}
of $P$, such that $\cp(P)$ has depth
at least $\left\lceil n/(d+1)\right\rceil$.
The centerpoint generalizes the concept of median to 
higher dimensions.
The theorem can be proven using
Helly's theorem~\cite{helly} or Tverberg theorem.

The Ham-Sandwich theorem~\cite{st-sandwich} shows that
for any set of $d$ finite point sets $P_1,\dots,P_d\subset \R^d$,
there is a hyperplane $H$ which bisects each point set, 
that is, each closed halfspace defined by $H$ contains at least
$\left\lceil |P_i|/2 \right\rceil$ points of $P_i$,
for $i \in [d]$.
The result follows by an application of the Borsuk-Ulam
theorem~\cite{mt-bubook}.

Zivaljevi\'{c} and Vre\'{c}ica~\cite{zv-extension} and 
Dol'nikov~\cite{dolnikov-generalization}, independently, 
proved a generalization of these two results for affine subspaces (\textit{flats})
:

\begin{theorem}
\label{theorem:centerpoint-hs}
Let $P_1,\dots,P_k$ be $k\le d$ finite point sets in $\R^d$.
Then there is a $(k-1)$-dimensional flat $F$ of depth at least
$|P_i|/(d-k+2)$ with respect to $P_i$, for $i \in [k]$.
\end{theorem}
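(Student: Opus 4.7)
The plan is to follow Dol'nikov's approach, which routes through a colorful piercing-by-flats theorem and reduces everything to a combinatorial/topological assertion about transversals. Throughout, set $m_i := \lceil |P_i|/(d-k+2)\rceil$.

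First I would reformulate the depth condition via separation. The claim is that a $(k-1)$-flat $F$ has depth at least $m_i$ with respect to $P_i$ iff for every $Q \subseteq P_i$ of size $\geq |P_i|-m_i+1$ we have $\conv(Q) \cap F \neq \emptyset$. The forward implication follows because if $\conv(Q) \cap F = \emptyset$, there is a hyperplane $H$ containing $F$ and strictly separating $\conv(Q)$ from $F$; the closed half-space bounded by $H$ that avoids $\conv(Q)$ contains $F$ and at most $|P_i|-|Q| < m_i$ points of $P_i$, contradicting the depth bound. The reverse direction is similar: a depth-witnessing half-space $H$ with few points of $P_i$ produces a large subset $Q$ lying in the complementary open half-space, whose convex hull misses $F$.

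This reduces the theorem to finding a single $(k-1)$-flat that meets every set in the family $\mathcal{F} := \bigcup_{i=1}^{k} \mathcal{F}_i$, where $\mathcal{F}_i := \{\conv(Q) : Q \subseteq P_i,\ |Q| \geq |P_i|-m_i+1\}$. Here I would invoke Dol'nikov's piercing theorem: given $k$ finite families $\mathcal{F}_1,\dots,\mathcal{F}_k$ of convex sets in $\R^d$ with $k \leq d$, if every colorful transversal $(C_1,\dots,C_k) \in \mathcal{F}_1\times\cdots\times\mathcal{F}_k$ satisfies $\bigcap_i C_i \neq \emptyset$, then some $(k-1)$-flat meets every member of $\bigcup_i \mathcal{F}_i$. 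I would use this as a black box; its proof goes through the Borsuk-Ulam theorem and $\Z_2$-equivariant maps on an appropriate configuration space (for the $k=1$ case this degenerates to the standard centerpoint theorem via Helly, and for $k=d$ it specializes to the ham-sandwich theorem).

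It remains to verify the colorful hypothesis: for arbitrary $Q_i \subseteq P_i$ with $|Q_i| \geq |P_i|-m_i+1$, the intersection $\bigcap_{i=1}^{k} \conv(Q_i)$ is nonempty. Suppose not; by an LP-duality/Helly argument for the $k$ convex sets in $\R^d$, one obtains linear functionals $\varphi_1,\dots,\varphi_k$ and scalars $\alpha_i$ exhibiting incompatibility. The sharp bound $d-k+2$ on the allowed ``defect size'' $|P_i|-|Q_i|$ translates, via a fractional counting argument combined with Carath\'eodory on $P_i \setminus Q_i$, into the statement that the total missing mass across $i \in [k]$ is strictly less than the mass needed to fill all $d-k+2$ effective directions, yielding the contradiction.

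The main obstacle is this last step with the tight constant $d-k+2$: a weaker bound such as $d+1$ falls out easily from plain Helly, but matching the sharp constant requires carefully exploiting $k \leq d$ and the fact that we are piercing by a flat of dimension exactly $k-1$ (rather than by a point). Dol'nikov's piercing theorem itself, which is the deeper topological ingredient, is invoked as a black box.
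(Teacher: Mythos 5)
First, note that the paper does not actually prove this statement: Theorem~\ref{theorem:centerpoint-hs} is quoted as a known result of \v{Z}ivaljevi\'c--Vre\'cica and Dol'nikov, and the paper only proves a no-dimensional relaxation of it (Theorem~\ref{theorem:nodim-gghs}) by an elementary projection argument. So your attempt has to stand on its own, and as written it has a genuine gap. Your first step (depth of a flat $F$ with respect to $P_i$ is at least $m_i$ iff $F$ meets $\conv(Q)$ for every $Q\subseteq P_i$ with $|Q|\ge |P_i|-m_i+1$) is correct. The problem is the ``colorful hypothesis'' you then set out to verify: it is simply false that $\bigcap_{i=1}^{k}\conv(Q_i)\neq\emptyset$ for arbitrary large subsets $Q_i\subseteq P_i$. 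The sets $P_1,\dots,P_k$ are unrelated and may be far apart; already for $k=2$, if $P_1$ and $P_2$ lie in two distant clusters, then every $Q_1\subseteq P_1$ and $Q_2\subseteq P_2$ have disjoint convex hulls. So the final step is not a matter of sharpening a Helly/Carath\'eodory count to the constant $d-k+2$; the statement to be verified has no proof. Relatedly, the ``piercing theorem'' you invoke as a black box is misstated: with only the cross-intersecting (colorful) hypothesis the conclusion fails already for $k=1$ (each member of $\mathcal{F}_1$ nonempty does not yield a point piercing all members), and for $k=2$, $d=2$ one can take $\mathcal{F}_2$ to consist of a single huge disk (making the cross-intersection condition vacuous) and $\mathcal{F}_1$ to be three tiny disks at the vertices of a large triangle, which admit no line transversal.

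The condition that actually holds in this reduction, and that carries the constant $d-k+2$, is a \emph{within-family} one: any $d-k+2$ members of the same family $\mathcal{F}_i$ miss fewer than $|P_i|$ points in total, hence share a point of $P_i$. By projecting orthogonally to any $(d-k+1)$-dimensional subspace and applying Helly there, each single family admits a $(k-1)$-flat transversal in every direction. The genuine content of the Dol'nikov/\v{Z}ivaljevi\'c--Vre\'cica theorem is to choose the direction (and the piercing point of the projections) so that one flat works for all $k$ families \emph{simultaneously}; this is a topological statement about a configuration space of $(d-k+1)$-dimensional projections, for which Borsuk--Ulam suffices only at the extreme $k=d$, while the intermediate cases require a properly equivariant index/cohomological argument. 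Your write-up inverts this structure: it puts the (false) cross-family intersection condition where the within-family counting belongs, and it hides the actual topological difficulty inside a black box whose stated hypothesis does not match any theorem that is true.
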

For $k=1$, this corresponds to the Centerpoint theorem while 
for $k=d$, this is the Ham-Sandwich theorem, and thereby interpolates
between the two extremes.

We prove a no-dimensional version of this theorem, where 
$1/(d-k+2)$ can be relaxed to be an arbitrary 
but reasonable fraction.
In fact, we prove a slightly stronger version that allows an 
independent choice of fraction for each point set $P_i$ individually.
The idea is motivated by the result of 
\Barany, Hubard and Jer\'onimo, who showed in~\cite{bhj-ghs} that 
under certain conditions
of ``well-separation", $d$ compact sets $S_1,\dots,S_d\subset\R^d$ 
can be divided by a hyperplane that such the positive half-space 
contains an $(\alpha_1,\dots,\alpha_d)$-fraction of the volumes 
of $S_1,\dots,S_d$, respectively.
A discrete version of this result for finite point sets
was proven by Steiger and Zhao
in~\cite{sz-ghs}, which they term as the 
\emph{Generalized Ham-Sandwich theorem}.
Our result can be interpreted as a no-dimensional version of this result,
but we do not have constraints on the point sets
as in~\cite{bhj-ghs,sz-ghs}.

Without loss of generality, we assume that the centroid $c(P_1)=\orig$.
We first approach a simpler case:

\begin{lemma}
\label{lemma:easy-centerpoint}
Let $c(P_1)=\dots=c(P_k)=\orig$
and $m_1, \dots, m_k$, $2 \leq m_i \leq |P_i|$ for $i \in [k]$,
be any choice of integers.
Then the ball of radius 
\[
(2+2\sqrt{2})\max_i \frac{\diam(P_i)}{\sqrt{m_i}}
\]
centered at $\orig$ has depth at least 
$\left\lceil |P_i|/m_i \right\rceil$
with respect to $P_i$, for $i \in [k]$.
\end{lemma}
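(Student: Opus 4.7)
The plan is to treat each $P_i$ independently and show that the ball $B$ centered at $\orig$ of radius $\rho := (2+2\sqrt{2})\max_j \diam(P_j)/\sqrt{m_j}$ has depth at least $\lceil |P_i|/m_i\rceil$ with respect to $P_i$. Every halfspace containing $B$ has the form $\{x : \langle u,x\rangle \ge t\}$ for some unit vector $u$ and some $t \le -\rho$; the tightest such halfspace, corresponding to $t = -\rho$, contains the fewest points of $P_i$. Hence the depth equals $\min_{\|u\|=1}\bigl|\{p\in P_i : \langle u,p\rangle \ge -\rho\}\bigr|$, and it suffices to lower-bound this count uniformly over $u$.

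For a fixed $u$, I would analyze the scalars $y_p := \langle u,p\rangle$ for $p \in P_i$. The centroid hypothesis $c(P_i) = \orig$ gives $\sum_p y_p = 0$, and the identity $\sum_{p,q}\|p-q\|^2 = 2|P_i|\sum_p\|p\|^2$ (which holds for centroid-origin sets and is already exploited in Section~\ref{section:averaging}) yields $\sum_p y_p^2 \le \sum_p\|p\|^2 \le \tfrac{1}{2}|P_i|\diam(P_i)^2$. So the empirical distribution on $\{y_p\}$ has mean $0$ and second moment $\sigma^2 \le \diam(P_i)^2/2$. Cantelli's one-sided Chebyshev inequality then gives
\[
\bigl|\{p \in P_i : y_p \ge -\rho\}\bigr| \;\ge\; |P_i|\cdot\frac{\rho^2}{\sigma^2+\rho^2} \;\ge\; |P_i|\cdot\frac{\rho^2}{\tfrac{1}{2}\diam(P_i)^2+\rho^2},
\]
and requiring the right-hand side to be at least $|P_i|/m_i$ rearranges to $\rho \ge \diam(P_i)/\sqrt{2(m_i-1)}$.

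To finish, I would verify that the stated $\rho$ dominates $\diam(P_i)/\sqrt{2(m_i-1)}$ for every $i$ with $m_i \ge 2$; this reduces to checking $(2+2\sqrt{2})^2\cdot 2(m_i-1) \ge m_i$, which is straightforward since the left-hand side is linear in $m_i$ with slope $24+16\sqrt{2}$. Because the count is integer-valued, the inequality $|\{p : y_p \ge -\rho\}| \ge |P_i|/m_i$ implies $|\{p : y_p \ge -\rho\}| \ge \lceil |P_i|/m_i\rceil$, as required. The main subtlety I anticipate is not a technical obstacle but rather the substantial gap between the natural Cantelli bound and the stated prefactor $2(1+\sqrt{2})$: the proof above actually suffices with the tighter radius $\diam(P_i)/\sqrt{2(m_i-1)}$, so the stated constant likely reflects an alternate route. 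The most plausible alternative is to apply Theorem~\ref{theorem:tverberg} separately to each $P_i$ (partitioning into $\lceil |P_i|/m_i\rceil$ near-balanced parts) and then re-center the resulting Tverberg ball at the origin via the triangle inequality, with the factor $2$ arising from the re-centering and the factor $1+\sqrt{2}$ plausibly from a Pythagorean combination of the type appearing in Equation~\eqref{equation:balanced-radius-bound}.
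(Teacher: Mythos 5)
Your proof is correct, and it takes a genuinely different route from the paper's. The paper argues via depth certificates: it applies the no-dimensional Tverberg theorem of Adiprasito et al.\ \cite[Theorem~2.5]{abm-nodim} to each $P_i$ separately, obtaining a partition of $P_i$ into $\ceil{|P_i|/m_i}$ parts whose convex hulls all meet a ball of radius $(2+\sqrt{2})\diam(P_i)\sqrt{\ceil{|P_i|/m_i}/|P_i|} < (2+2\sqrt{2})\diam(P_i)/\sqrt{m_i}$ centered at $c(P_i)=\orig$; any halfspace containing that ball must then contain a point of each part, since otherwise some part's convex hull would miss the ball. (So your closing speculation is close but not quite right: the constant comes directly from that cited theorem, already centered at the centroid, with the extra factor from $\ceil{|P_i|/m_i}/|P_i| < 2/m_i$ --- no re-centering step is involved.) Your argument instead reduces depth to the one-dimensional statement $\min_{\|u\|=1}\bigl|\{p\in P_i : \dotp{u}{p}\ge -\rho\}\bigr|$, bounds the second moment of the projections by $\tfrac12\diam(P_i)^2$ using the centroid condition, and applies Cantelli's inequality; all steps check out, including the integrality step at the end and the comparison $(2+2\sqrt2)/\sqrt{m_i}\ge 1/\sqrt{2(m_i-1)}$ for $m_i\ge 2$. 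What each approach buys: yours is more elementary (no appeal to the external no-dimensional Tverberg/Carath\'eodory machinery) and, as you note, proves a strictly stronger statement, since radius $\diam(P_i)/\sqrt{2(m_i-1)}$ already suffices. The paper's route, on the other hand, produces an explicit partition of each $P_i$ whose parts witness the depth claim --- the kind of verifiable certificate emphasized in the introduction --- and keeps the lemma consistent with the certificate-based philosophy of the rest of the paper.
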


\begin{proof}
Consider any point set $P_i$
and a no-dimensional $\left\lceil \frac{|P_i|}{m_i} \right\rceil$-partition
of $P_i$.
From~\cite[Theorem\,2.5]{abm-nodim}, we know that the ball $B$
centered at $c(P_i)=\orig$ of radius 
\begin{equation*}
(2+\sqrt{2})\diam(P_i)
\sqrt{\frac{\left\lceil |P_i|/m_i \right\rceil}{|P_i|}}
<(2+\sqrt{2})\diam(P_i) \sqrt{\frac{2}{m_i}}
=\frac{(2+2\sqrt{2})\diam(P_i)}{\sqrt{m_i}}
\end{equation*}
intersects each set of the partition.
Let $H$ be any half-space that contains $B$.
We claim that $H$ contains at least one point from each set in 
the partition.
Assume for contradiction that $H$ does not contain
any point from a given set in the partition. 
Then, the convex hull of that set does not intersect $H$, and hence
$B$, which is a contradiction.
This shows that $B$ has depth $\left\lceil |P_i|/m_i \right\rceil$.
Let $B'$ be the ball of radius 
$(2+2\sqrt{2})\max_i \diam(P_i)/\sqrt{m_i}$
centered at the origin.
Then $B'$ has depth at least $\left\lceil |P_i|/m_i \right\rceil$
with respect to $P_i$ for each $i=1,\dots,k$.
\end{proof}
We prove an auxiliary result that will be helpful in proving
the main result:
\begin{lemma}
\label{lemma:projection}
Let $P_1,\dots,P_k \subset \R^{d_1}$ be finite point sets.
Let $v$ be any vector in $\R^{d_1}$ and 
project $P_1,\dots,P_k$ on the hyperplane $H$ via $\orig$ with normal $v$.
If some set $X\subset H$ has depth $\alpha_1,\dots,\alpha_d$ respectively 
for the projected point sets,
then $X\times \R_v \subset \R^{d_1}$ has the same depths for the original
point sets, where $\R_v$ is the one dimensional subspace containing $v$.
\end{lemma}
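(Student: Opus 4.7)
The plan is to exploit the fact that $X \times \ell$ contains, through every point of $X$, a full translate of the line $\ell$, so that any halfspace containing $X \times \ell$ must be a cylinder over $\ell$. This will reduce depth questions in $\R^{d_1}$ to depth questions inside the hyperplane $H$, where the hypothesis on $X$ applies directly.

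Concretely, fix an index $i$ and let $Q = \{z \in \R^{d_1} : \dotp{z}{w} \leq c\}$ be an arbitrary closed halfspace containing $X \times \ell$. For any $x \in X$, the entire line $x + \R v$ lies in $Q$, which forces $\dotp{x + tv}{w} \leq c$ for all $t \in \R$. Letting $t \to \pm \infty$ I would conclude $\dotp{v}{w} = 0$, so $w \in H$. Thus the bounding hyperplane of $Q$ is parallel to $\ell$, and $Q$ is exactly the cylinder over the halfspace $Q' := \{y \in H : \dotp{y}{w} \leq c\}$ in $H$.

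Next, because $w$ is orthogonal to $v$, for any point $p \in \R^{d_1}$ we have $\dotp{p}{w} = \dotp{\pi_H(p)}{w}$, where $\pi_H$ denotes orthogonal projection onto $H$. Hence $p \in Q$ if and only if $\pi_H(p) \in Q'$. Since $X \subset H$ and $X \subset X \times \ell \subset Q$, we get $X \subset Q'$. Applying the depth hypothesis for $X$ with respect to $P_i'$ (the projection of $P_i$ onto $H$), the halfspace $Q'$ contains at least $\alpha_i$ points of $P_i'$. Each such projected point $\pi_H(p)$ with $p \in P_i$ lies in $Q'$ precisely when $p$ itself lies in $Q$, so $Q$ contains at least $\alpha_i$ points of $P_i$. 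As $Q$ was arbitrary, $X \times \ell$ has depth at least $\alpha_i$ with respect to $P_i$.

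The only real subtlety is the first step, namely checking that halfspaces in $\R^{d_1}$ containing $X \times \ell$ are exactly the preimages under $\pi_H$ of halfspaces in $H$ containing $X$; once this bijection is set up, the depth transfer is immediate from the orthogonality $w \perp v$. No other technical hurdles arise, and the argument applies uniformly to all $i \in [k]$.
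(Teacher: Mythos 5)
Your proof is correct and follows essentially the same route as the paper: show that any halfspace containing $X\times\ell$ must have its normal orthogonal to $v$, hence is a cylinder $\hat{T}\times\ell$ over a halfspace $\hat{T}\subset H$ containing $X$, and then transfer the point count through the orthogonal projection. You actually spell out the limit argument for $\dotp{v}{w}=0$ that the paper leaves implicit, so no gaps remain.
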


\begin{proof}
Consider any half-space $\mathcal{H}\subset \R^{d_1}$ that contains
$X\times \R_v$.
Then $\mathcal{H}$ contains $\R_v$, 
so it can be written as $\hat{\mathcal{H}}\times \R_v$, where $\hat{\mathcal{H}}\subset H$
is a half-space containing $X$.
$\hat{\mathcal{H}}$ contains at least $\alpha_i$ points of each $P_i$.
By orthogonality of the projection, $\mathcal{H}$ also contains at least 
$\alpha_i$ points of each $P_i$, proving the claim.
\end{proof}
\paragraph{Proof of \Cref{theorem:nodim-gghs}(\ref{gghs1}).}

Given point sets $P_1,\dots,P_k$ with $c(P_1)=\orig$,
we apply orthogonal projections on the points multiple times
so that their centroids coincide.
In the first step, we set $v_1=c(P_2)$.
Let $l_1$ be the line through the origin containing $v_1$ and let
$H_{v_1}$ be the hyperplane via $\orig$ with normal $v_1$.
Let $f_1:\R^d \rightarrow H_{v_1}$ be the orthogonal projection
defined as $f(p)=p-\dotp{p}{v}\frac{v}{|v|^2}$.
Let $P_1^{1},\dots,P_k^{1}\subset \R^{d-1}$ be the point sets 
obtained by applying the
orthogonal projection on $P_1,\dots,P_{k}$, respectively.
Under this projection $c(P_1^{1})=c(P_2^{1})=\orig$.
In the next step we set $v_2=c(P_3^{1})$ and define $l_2$ and $H_{v_2}$
analogously.
We project $P_1^{1},\dots,P_k^{1}$ onto $H_{v_2}$ to get 
$P_1^{2},\dots,P_k^{2}$ with $c(P_1^{2})=c(P_2^{2})=c(P_3^{2})=\orig$.
We repeat this process $k-1$ times to get
a set of points $P_1^{k-1},\dots,P_k^{k-1}\subset \R^{d-k+1}$ with 
$c(P_1^{k-1})=\dots=c(P_k^{k-1})=\orig$.
Using Lemma~\ref{lemma:easy-centerpoint}, there is a ball $B$
of radius 
\[
(2+2\sqrt{2})\max_i \frac{\diam(P_i^{k-1})}{\sqrt{m_i}}
<(2+2\sqrt{2})\max_i \frac{\diam(P_i)}{\sqrt{m_i}}
\]
of the required depth.
Applying Lemma~\ref{lemma:projection}
on $P_1^{k-2},\dots,P_k^{k-2}\subset \R^{d-k+2}$,
$B\times \ell_{k-1}$ also has the required depth. 
Repeated application of Lemma~\ref{lemma:projection}
gives us $B\times \ell_{k-1}\times \ell_{k-2}\times\dots\times \ell_{1}$.
Since the Cartesian product may have more than $d$ co-ordinates,
we apply a linear transformation so that the subspace spanned by 
the orthogonal set $\ell_1,\dots,\ell_{k-1}$
is $\R^{k-1}$.
Then, $B\times \R^{k-1}$ has the desired properties.

\paragraph{Proof of \Cref{theorem:nodim-gghs}(\ref{gghs2}).}

To compute the vectors $v_1,\dots,v_{k-1}$, we note that 
\[
v_i=c(P_{i+1}^{i-1})=c(f_{i-1}\circ f_{i-2}\circ \dots \circ f_{1}(P_{i+1}^{i-1}))
=f_{i-1}\circ f_{i-2}\circ \dots \circ f_{1} (c(P_{i+1}^{i-1})),
\]
by linearity of the projection.
Therefore, at the beginning we first compute each centroid $c(P_i)$
and in each step we apply the projection on the relevant centroids.
The projection is applied $1+\dots+k-2=O(k^2)$ times.
Computing the centroid in the first step takes $O(\sum_{i} |P_i|d)$ time.
Computing the projection once takes $O(d)$ time, so in total
$O(dk^2)$ time.
Finding the linear transformation takes another $O(d^6)$ time.

\section{Conclusion and future work}
\label{section:conclusion}

We gave efficient algorithms for a no-dimensional version of Tverberg
theorem and for a colorful counterpart.
To achieve this end, we presented a refinement of Sarkaria's 
tensor product construction by defining vectors using a graph.
The choice of the graph was different for the general- and the 
balanced-partition cases and also influenced the time complexity 
of the algorithms.
It would be interesting to find more applications
of this refined tensor product method.
Another option could be to look at non-geometric generalizations
based on similar ideas. It would also be interesting to 
consider no-dimensional variants other generalizations of Tverberg's
theorem, e.g., in the tolerant setting~\cite{MulzerSt14,soberon-equal}.

The radius bound that we obtain for the Tverberg partition
is $\sqrt{k}$ off the optimal bound in~\cite{abm-nodim}.
This seems to be a limitation in handling
Equation~\eqref{equation:balanced-centroid-norm}.
It is not clear if this is an artifact of using 
tensor product constructions.
It would be interesting to explore if this factor can be brought down
without compromising on the algorithmic complexity.
In the general partition case, setting $r_1=\dots=r_k$ gives a bound
that is $\sqrt{\ceil{\log k}}$ worse than the balanced case,
so there is some scope for optimization.
In the colorful case, the radius bound is again $\sqrt{k}$ 
off the optimal~\cite{abm-nodim}, but with a silver lining.
The bound is proportional to $\max_i{\diam(P_i)}$ in contrast
to $\diam(P_1\cup \dots \cup P_n)$ in~\cite{abm-nodim}, which is
better when the colors are well-separated.

The algorithm for colorful Tverberg theorem has a worse runtime than
the regular case.
The challenge in improving the runtime lies a bit with selecting 
an optimal graph as well as the nature of the problem itself.
Each iteration in the algorithm looks at each of the permutations
$\pi_1,\dots,\pi_k$ and computes the respective expectations.
The two non-zero terms in the expectation are both computed
using the chosen permutation.
The permutation that minimizes the first term can be determined 
quickly if $\qgraph$ is chosen as a path graph.
This worsens the radius bound by $\sqrt{k-1}$.
Further, computing the other (third) term of the expectation still requires 
$O(k)$ updates per permutation and therefore $O(k^2)$ updates
per iteration, thereby eliminating
the utility of using an auxiliary tree to determine the best
permutation quickly.
The optimal approach for this problem is unclear at the moment. 

\bibliographystyle{plain}
\bibliography{references}

\newcommand{\SortNoop}[1]{}
\begin{thebibliography}{10}

\bibitem{abm-nodim}
Karim Adiprasito, Imre B{\'{a}}r{\'{a}}ny, and Nabil Mustafa.
\newblock Theorems of {C}arath{\'{e}}odory, {H}elly, and {T}verberg without
  dimension.
\newblock In {\em Proc. 30th Annu. ACM-SIAM Sympos. Discrete Algorithms
  (SODA)}, pages 2350--2360, 2019.

\bibitem{as-probabilistic}
Noga Alon and Joel~H. Spencer.
\newblock {\em The {P}robalistic method}.
\newblock John Wiley \& Sons, 2008.

\bibitem{abbfm-colorful}
Jorge~L. Arocha, Imre B{\'{a}}r{\'{a}}ny, Javier Bracho, Ruy~Fabila Monroy, and
  Luis Montejano.
\newblock Very colorful theorems.
\newblock {\em Discrete Comput. Geom.}, 42(2):42--154, 2009.

\bibitem{barany-cc}
Imre B{\'{a}}r{\'{a}}ny.
\newblock A generalization of {C}arath{\'{e}}odory's theorem.
\newblock {\em Discrete Mathematics}, 40(2-3):141--152, 1982.

\bibitem{BaranyBlZi16}
Imre B\'{a}r\'{a}ny, Pavle V.~M. Blagojevi\'{c}, and G\"{u}nter~M. Ziegler.
\newblock {T}verberg's theorem at 50: extensions and counterexamples.
\newblock {\em Notices Amer. Math. Soc.}, 63(7):732--739, 2016.

\bibitem{bhj-ghs}
Imre B{\'a}r{\'a}ny, Alfredo Hubard, and Jes{\'u}s Jer{\'o}nimo.
\newblock Slicing convex sets and measures by a hyperplane.
\newblock {\em Discrete Comput. Geom.}, 39(1):67--75, 2008.

\bibitem{bl-colorful}
Imre B{\'a}r{\'a}ny and David~G. Larman.
\newblock A colored version of {T}verberg's theorem.
\newblock {\em Journal of the London Mathematical Society}, s2-45(2):314--320,
  1992.

\bibitem{bo-tverberg}
Imre B{\'a}r{\'a}ny and Shmuel Onn.
\newblock Colourful linear programming and its relatives.
\newblock {\em Mathematics of Operations Research}, 22(3):550--567, 1997.

\bibitem{bmz-colorful}
Pavle Blagojevi{\'c}, Benjamin Matschke, and G{\"u}nter Ziegler.
\newblock Optimal bounds for the colored {T}verberg problem.
\newblock {\em Journal of the European Mathematical Society}, 017(4):739--754,
  2015.

\bibitem{cm-tverberg-socg}
Aruni Choudhary and Wolfgang Mulzer.
\newblock No-dimensional {T}verberg theorems and algorithms.
\newblock In {\em Proc. 36th Int. Sympos. Comput. Geom. (SoCG)}, pages
  31:1--31:17, 2020.

\bibitem{ClarksonEMST96}
Kenneth~L. Clarkson, David Eppstein, Gary~L. Miller, Carl Sturtivant, and
  Shang-Hua Teng.
\newblock Approximating center points with iterative {R}adon points.
\newblock {\em Internat. J. Comput. Geom. Appl.}, 6(3):357--377, 1996.

\bibitem{dolnikov-generalization}
Vladimir~L. Dol'nikov.
\newblock A generalization of the {H}am sandwich theorem.
\newblock {\em Mat. Zametki}, 52(2):27--37, 1992.

\bibitem{Filos-RatsikasG19}
Aris Filos-Ratsikas and Paul~W. Goldberg.
\newblock The complexity of splitting necklaces and bisecting {H}am sandwiches.
\newblock In {\em Proc. 51st Annu. ACM Sympos. Theory Comput. (STOC)}, pages
  638--649, 2019.

\bibitem{Har-PeledJ19}
Sariel Har-Peled and Mitchell Jones.
\newblock Journey to the center of the point set.
\newblock In {\em Proc. 35th Int. Sympos. Comput. Geom. (SoCG)}, pages
  41:1--41:14, 2019.

\bibitem{hpz-tverberg}
Sariel Har-Peled and Timothy Zhou.
\newblock Improved approximation algorithms for {T}verberg partitions.
\newblock \texttt{arXiv:2007.08717}.

\bibitem{helly}
Eduard Helly.
\newblock \"{U}ber {M}engen konvexer {K}\"{o}rper mit gemeinschaftlichen
  {P}unkten.
\newblock {\em Jahresbericht der Deutschen Mathematiker-Vereinigung},
  32:175--176, 1923.

\bibitem{dgmm-survey}
Jes\'us~De Loera, Xavier Goaoc, Fr{\'e}d{\'e}ric Meunier, and Nabil Mustafa.
\newblock The discrete yet ubiquitous theorems of {C}arath{\'{e}}odory,
  {H}elly, {S}perner, {T}ucker, and {T}verberg.
\newblock {\em Bulletin of the American Mathematical Society}, 56(3):415--511,
  2019.

\bibitem{mt-bubook}
Ji\v{r}\'{i} Matou\v{s}ek.
\newblock {\em Using the {B}orsuk-Ulam theorem}.
\newblock Springer-Verlag Berlin Heidelberg, 2003.

\bibitem{mtw-geometric}
Ji\v{r}\'{i} Matou\v{s}ek, Martin Tancer, and Uli Wagner.
\newblock A geometric proof of the colored {T}verberg theorem.
\newblock {\em Discrete Comput. Geom.}, 47(2):245--265, 2012.

\bibitem{mmss-cc}
Fr\'ed\'eric Meunier, Wolfgang Mulzer, Pauline Sarrabezolles, and Yannik Stein.
\newblock The rainbow at the end of the line: A {PPAD} formulation of the
  {C}olorful {C}arath{\'e}odory theorem with applications.
\newblock In {\em Proc. 28th Annu. ACM-SIAM Sympos. Discrete Algorithms
  (SODA)}, pages 1342--1351, 2017.

\bibitem{MillerSh10}
Gary~L. Miller and Donald~R. Sheehy.
\newblock Approximate centerpoints with proofs.
\newblock {\em Comput. Geom. Theory Appl.}, 43(8):647--654, 2010.

\bibitem{MulzerSt14}
Wolfgang Mulzer and Yannik Stein.
\newblock Algorithms for tolerant {T}verberg partitions.
\newblock {\em Internat. J. Comput. Geom. Appl.}, 24(4):261--274, 2014.

\bibitem{MulzerSt18}
Wolfgang Mulzer and Yannik Stein.
\newblock Computational aspects of the {C}olorful {C}arath{\'{e}}odory theorem.
\newblock {\em Discrete Comput. Geom.}, 60(3):720--755, 2018.

\bibitem{mw-approx}
Wolfgang Mulzer and Daniel Werner.
\newblock Approximating {T}verberg points in linear time for any fixed
  dimension.
\newblock {\em Discrete Comput. Geom.}, 50(2):520--535, 2013.

\bibitem{NRTV2007}
Noam Nisan, Tim Roughgarden, {\'{E}}va Tardos, and Vijay~V. Vazirani, editors.
\newblock {\em Algorithmic Game Theory}.
\newblock Cambridge University Press, 2007.

\bibitem{rado-centerpoint}
Richard Rado.
\newblock A theorem on general measure.
\newblock {\em Journal of the London Mathematical Society}, s1-21(4):291--300,
  1946.

\bibitem{radon}
Johann Radon.
\newblock Mengen konvexer {K}{\"o}rper, die einen gemeinsamen {P}unkt
  enthalten.
\newblock {\em Mathematische Annalen}, 83:113--115, 1921.

\bibitem{roudneff-conic}
Jean-Pierre Roudneff.
\newblock Partitions of points into simplices with {$k$}-dimensional
  intersection. {I}. {T}he conic {T}verberg's theorem.
\newblock {\em European Journal of Combinatorics}, 22(5):733--743, 2001.

\bibitem{sarkaria}
Karanbir~S. Sarkaria.
\newblock Tverberg's theorem via number fields.
\newblock {\em Israel Journal of Mathematics}, 79(2-3):317--320, 1992.

\bibitem{soberon-equal}
Pablo Sober{\'{o}}n.
\newblock Equal coefficients and tolerance in coloured {T}verberg partitions.
\newblock {\em Combinatorica}, 35(2):235--252, 2015.

\bibitem{spielman-sgt}
Daniel Spielman.
\newblock Spectral graph theory.

\bibitem{sz-ghs}
William Steiger and Jihui Zhao.
\newblock Generalized {H}am-sandwich cuts.
\newblock {\em Discrete Comput. Geom.}, 44(3):535--545, 2010.

\bibitem{st-sandwich}
Arthur~H. Stone and John~W. Tukey.
\newblock Generalized ``{S}andwich'' theorems.
\newblock {\em Duke Mathematical Journal}, 9(2):356--359, 06 1942.

\bibitem{tverberg-original}
Helge Tverberg.
\newblock A generalization of {R}adon's theorem.
\newblock {\em Journal of the London Mathematical Society}, s1-41(1):123--128,
  1966.

\bibitem{tverberg-second}
Helge Tverberg.
\newblock A generalization of {R}adon's theorem {II}.
\newblock {\em Journal of the Australian Mathematical Society}, 24(3):321--325,
  1981.

\bibitem{tv-generalization}
Helge Tverberg and Sini\v{s}a~T. Vre\'{c}ica.
\newblock On generalizations of {R}adon's theorem and the {H}am-sandwich
  theorem.
\newblock {\em European Journal of Combinatorics}, 14(3):259--264, 1993.

\bibitem{zv-extension}
Rade~T. Zivaljevi\'{c} and Sini\v{s}a~T. Vre\'{c}ica.
\newblock An extension of the {H}am sandwich theorem.
\newblock {\em Bulletin of the London Mathematical Society}, 22(2):183--186,
  1990.

\bibitem{zv-colorful}
Rade~T. Zivaljevi\'{c} and Sini\v{s}a~T. Vre\'{c}ica.
\newblock The colored {T}verberg's problem and complexes of injective
  functions.
\newblock {\em Journal of Combinatorial Theory, Series A.}, 61:309--318, 1992.

\end{thebibliography}

\end{document}